\newtheorem{proposition}{Proposition}
\newtheorem{definition}{Definition}
\newtheorem{lemma}{Lemma}
\newtheorem*{rep@proposition}{\rep@title}
    \newcommand{\newrepproposition}[2]{%
    \newenvironment{rep#1}[1]{%
     \def\rep@title{#2 \ref{##1}}%
     \begin{rep@proposition}}%
     {\end{rep@proposition}}}
\newtheorem*{rep@corollary}{\rep@title}
    \newcommand{\newrepcorollary}[2]{%
    \newenvironment{rep#1}[1]{%
     \def\rep@title{#2 \ref{##1}}%
     \begin{rep@corollary}}%
     {\end{rep@corollary}}}
\def\defn{\,\coloneqq\,}
\def\prox{{\mathsf{prox}}}
\def\max{{\mathsf{max}}}
\def\min{\mathop{\mathsf{min}}}
\def\defn{\,\coloneqq\,}
\def\R{\mathbb{R}}
\def\N{\mathbb{N}}
\def\dbm{{\bm{d}}}
\def\ebm{{\bm{e}}}
\def\gbm{{\bm{g}}}
\def\mbm{{\bm{m}}}
\def\sbm{{\bm{s}}}
\def\ubm{{\bm{u}}}
\def\wbm{{\bm{w}}}
\def\xbm{{\bm{x}}}
\def\ybm{{\bm{y}}}
\def\zbm{{\bm{z}}}
\def\deltabm{{\bm{\delta}}}
\def\betabm{{\bm{\beta}}}
\def\Abm{{\bm{A}}}
\def\Ibm{{\bm{I}}}
\def\Mbm{{\bm{M}}}
\def\Abm{{\bm{A}}}
\def\Dbm{{\bm{D}}}
\def\Ibm{{\bm{I}}}
\def\Wbm{{\bm{W}}}
\def\Ucal{{\mathcal{U}}}
\def\Tcal{{\mathcal{T}}}
\def\Scal{{\mathcal{S}}}
\def\Tsf{{\mathsf{T}}}
\def\Tsf{{\mathsf{T}}}
\def\operatorBoundConst{\epsilon_1}
\def\operatorSubdiffConst{\epsilon_2}
\def\sizeConst{n}
\def\dimConst{d}
\def\shrinkStepConst{2 \sqrt{ \dimConst }}
\NewDocumentCommand{\approxTV}{O{}}{\Scal^{#1}_{\tau}(\zbm)}
\NewDocumentCommand{\shrinkageFunction}{O{} O{} m}{\Tcal^{#1}_{#2}(#3)}
\def\TV{{h}}
\NewDocumentCommand{\helperh}{O{}}{\bar{h}^{#1}}
\def\approxh{\widehat{h}}
\def\argmin{\mathop{\mathsf{arg\,min}}} %
\definecolor{DarkGreen}{rgb}{0.0, 0.5, 0.0} %
\title{Closed-Form Approximation of the\\Total Variation Proximal Operator}
\author{
Edward P.\ Chandler, Shirin Shoushtari,\\
Brendt\ Wohlberg, and Ulugbek S.\ Kamilov\\
\small Washington University in St. Louis, MO 63130, USA\\
\small Los Alamos National Laboratory, Los Alamos, NM 87545, USA\\
\small \texttt{\{e.p.chandler, s.shirin, kamilov\}@wustl.edu, \texttt{brendt@lanl.gov} }
}
\begin{document}

\maketitle
\let\thefootnote\relax\footnote{This research was supported by the Laboratory Directed Research and Development program of Los Alamos National Laboratory under project numbers 20200061DR and 20230771DI.}

\begin{abstract}
Total variation (TV) is a widely used function for regularizing imaging inverse problems that is particularly appropriate for images whose underlying structure is piecewise constant.
TV regularized optimization problems are typically solved using proximal methods,
but the way in which they are applied is constrained by the absence of a closed-form expression for the proximal operator of the TV function.
A closed-form approximation of the TV proximal operator has previously been proposed, but its accuracy was not theoretically explored in detail. 
We address this gap by making several new theoretical contributions, proving that the approximation leads to a proximal operator of some convex function, it is equivalent to a gradient descent step on a smoothed version of TV, and that its error can be fully characterized and controlled with its scaling parameter.
We experimentally validate our theoretical results on image denoising and sparse-view computed tomography (CT) image reconstruction.
\end{abstract}

\section{Introduction}\label{section: introduction}
Inverse problems are typically posed as the estimation of a signal $\xbm \in \R^{\sizeConst}$ from measurements 
\begin{equation} \label{eq:fwd}
\ybm = \Abm \xbm + \ebm \;,
\end{equation}
where $\Abm \in \R^{m \times \sizeConst}$ denotes the \emph{forward operator} and $\ebm \in \R^m$ represents additive noise. 
The classical approach to solving such problems is to pose them as an optimization problem
\begin{equation} \label{eq: main optimization problem}
    \xbm^{\ast} \in \argmin_{\xbm \in \R^{\sizeConst}} f(\xbm) \quad \text{with} \quad f(\xbm) = g(\xbm) + \lambda h(\xbm) \;, 
\end{equation}
 where $g$  and $h$  represent the data-fidelity and regularization terms, respectively, and $\lambda \geq 0 $ is the regularization parameter.
 When the noise in~(\ref{eq:fwd}) corresponds to the additive white Gaussian noise (AWGN), it is common to use the squared $\ell_2$ data fidelity term $g(\xbm) = \frac{1}{2} \| \ybm - \Abm \xbm \|_2^2$.
One of the most widely used choices for $h$ is total variation (TV)~\cite{Rudin.etal1992}, which promotes solutions with a sparse image gradient. While it is no longer competitive for natural imagery, it remains effective for other types
of imagery, such as those that occur in scientific or industrial imaging, where the implicit piecewise-constant image model
is appropriate.

In this work, we focus on the two most common forms of TV, anisotropic $h^{\text{a}}$ and isotropic $h^{\text{i}}$, which are defined as
\begin{align}
    h^{\text{a}}(\xbm) & :=  \sum_{i=1}^{\sizeConst} \sum_{j=1}^{\dimConst} \bigr| [\Dbm_j \xbm]_i \bigr| \label{eq: aniso TV def} \\
    h^{\text{i}}(\xbm) & :=   \sum_{i=1}^{\sizeConst} \sqrt{\sum_{j=1}^{\dimConst} \bigr( [\Dbm_j \xbm]_i \bigr)^2},  \label{eq: iso TV def}
\end{align}
where $\Dbm: \R^{\sizeConst} \rightarrow \R^{\sizeConst \dimConst}$ is the discrete gradient operator, $\sizeConst$ denotes the signal size (e.g. $\sizeConst=1024$ in a $32 \times 32$ image), and  $\dimConst$ represents the signal dimension (e.g. $\dimConst=2$ for images). 
 
Gradient descent and related methods are not appropriate for problem~\eqref{eq: main optimization problem} when $h$ is the TV function since it is non-smooth, but \emph{proximal methods}~\cite{Parikh2014} are very effective. They are based on the \emph{proximal operator} of a function $h$, defined as
\begin{equation} \label{eq: proximal operator}
    \prox_{\tau h}(\zbm) := \argmin_{\xbm \in \R^{\sizeConst}} \Bigg\{  \frac{1}{2}\| \xbm - \zbm \|_2^2 + \tau h(\xbm)  \Bigg\},
\end{equation}
where $\tau >0$ is the proximal scaling parameter. This family of optimization algorithms includes the
accelerated proximal gradient method (APGM)~\cite{Beck.Teboulle2009} and the alternating direction method of multipliers (ADMM)~\cite{Boyd.etal2011}. 

Unfortunately, there is no closed-form solution to the proximal operator~\eqref{eq: proximal operator} of the TV function, requiring it to be computed using an iterative approach~\cite{Beck.Teboulle2009a} within algorithms such as APGM. These sub-iterations can be avoided via an appropriate ADMM variable splitting strategy~\cite{goldstein-2009-split}, but in many cases this just pushes the computational complexity to a different component of the algorithm.

A closed-form approximation to the TV proximal operators was introduced in~\cite{Kamilov.etal2014a, Kamilov2016, Kamilov2017, Kamilov2016a}, based on the concept of proximal average~\cite{Yu2013}.
While the effectiveness of this approximation was analyzed within the context of APGM, its more general properties as an approximation to the TV proximal operator were not explored in detail, and its application in other proximal algorithms was not considered.
In this paper, we address this gap by presenting new theoretical results demonstrating that the operator is indeed a good approximation to the proximal operators of anisotropic and isotropic TV. 
More specifically, we present two new contributions:
\begin{enumerate}[label=(\arabic*)]
    \item We provide new theoretical justifications for the approximation in~\cite{Kamilov.etal2014a, Kamilov2016, Kamilov2017, Kamilov2016a}.
    Specifically, we show that it is the proximal operator of some convex function, that it always decreases the TV function, and that its accuracy depends on the scaling parameter of the approximate proximal operator.
    \item We provide new numerical results that validate our theoretical analysis by using the approximate TV proximal operator within APGM and ADMM algorithms for limited angle computed tomography reconstruction. These numerical results confirm the practical applicability of the approximate proximal operator and are consistent with the theory connecting the approximation accuracy with the scaling parameter.
\end{enumerate}

\section{Related Work} \label{section: background}

The TV regularizer promotes sparse image gradients, producing reconstructed images that are approximately piecewise constant. 
TV regularization has been demonstrated to be effective in various inverse problems, including image deconvolution, diffraction ultrasound tomography, compressed sensing, and optical tomography~\cite{Bronstein.etal2002, Afonso.etal2010, Candes.etal2006, Lustig.etal2007, Louchet.Moisan2008, Oliveira.etal2009, Kamilov.etal2015b, Qu.etal2020, J.Liuetal2018, Guo.Chen2021, Kong.etal2022, Chen.etal2020, Takeyama.etal2017, Ma.etal2022, Diwakar.etal2024, Chambolle.Pock2016, Luo.etal2018, kadu.etal2020}. 
APGM~\cite{Beck.Teboulle2009} is often employed to solve image reconstruction problems with the TV regularizer~\cite{Beck.Teboulle2009a}. However, APGM requires additional sub-iterations to compute the proximal operator of the TV regularizer, which can slow down the overall convergence. 
Other approaches such as Primal-Dual Hybrid Gradient Algorithm (PDHG)~\cite{Chambolle.Pock2011} and ADMM~\cite{Afonso.etal2010} can be used to avoid these sub-iterations; but they have their own drawbacks such as increased memory consumption to represent the image gradients.

A fast non-iterative algorithm for solving the 1D TV proximal operator has been proposed~\cite{Condat2013}, but it does not generalize for signals with more than one dimension, such as images.
To generalize to an arbitrary number of dimensions, a closed-form approximation of the proximal operator for TV was proposed, eliminating the need for sub-iterations~\cite{Kamilov.etal2014a, Kamilov2016, Kamilov2016a, Kamilov2017}. 
The convergence of APGM with this approximation for anisotropic TV was investigated in~\cite{Kamilov2017} and for isotropic TV was investigated in~\cite{Kamilov2016a}. These approximations share a close relationship with image reconstruction techniques using wavelet domain regularization~\cite{Mallat2009}. They can also be viewed as the application of the \emph{proximal average}~\cite{Yu2013} approximation for the proximal operator of a sum of multiple nonsmooth functions. Cycle spinning is another closely related concept originally introduced for denoising~\cite{Coifman.Donoho1995}, later refined~\cite{Fletcher.etal2002}, and applied in various imaging inverse problems, including image restoration, MRI reconstruction, 3D CT reconstruction, and computed tomography~\cite{Figueiredo.Nowak2003, Vonesch.Unser2008, Vonesch.Unser2009, Guerquin-Kern.etal2011, Ramani.Fessler2012, Kamilov.etal2012, Borisch.etal2023, Ong.etal2018, Teyfouri.etal2021}.

The convergence of proximal algorithms has been extensively investigated~\cite{Parikh2014}. Our work is more related to the family of \emph{inexact} proximal algorithms that rely on approximations of proximal operators or gradient calculations~\cite{Rockafellar1976, Guler1992, Aujol.Dossal2015, dAspremont2008, Devolder.etal2013, Salzo.Villa2012, Villa2013, Schmidt.etal2011, Nesterov2023, Bello-Cruz2020, Bertsekas2011}.
In particular, our theoretical analysis shows that the approximation in~\cite{Kamilov.etal2014a, Kamilov2016, Kamilov2016a, Kamilov2017} leads to an inexact TV proximal operator whose accuracy depends on its scaling parameter, thus enabling its use within existing proximal methods.

\begin{figure*}[t] %
\begin{minipage}[t]{0.48\textwidth}
    \begin{algorithm}[H]%
    \caption{APGM with $\approxTV$}
         \begin{algorithmic}[1] \label{alg: approx APGM}
         \renewcommand{\algorithmicrequire}{\textbf{input:}}
         \REQUIRE $g$, $\xbm_0, \zbm_0, \sbm_0
         \in \R^{\sizeConst}$, $\gamma>0$, $\lambda>0$, and $\{q_k\}_{k\in \N}$
            \FOR {$k = 1,2,...$}
            \STATE $\zbm^k= \sbm^{k-1} - \gamma \nabla g(\sbm^{k-1})$
            \STATE $\xbm^k= \Scal_{\gamma \lambda}(\zbm^{k})$
            \STATE $\sbm^k = \xbm^k + ((q_{k-1} - 1)/q_k)(\xbm^k- \xbm^{k-1})$
            \ENDFOR
         \end{algorithmic}
     \end{algorithm}
\end{minipage}
\hfill
\begin{minipage}[t]{0.48\textwidth}
    \begin{algorithm}[H]
    \caption{ADMM with $\approxTV$}
         \begin{algorithmic}[1] \label{alg: approx ADMM}
         \renewcommand{\algorithmicrequire}{\textbf{input:}}
         \REQUIRE $g$, $\xbm_0, \zbm_0, \sbm_0
         \in \R^{\sizeConst}$, $\gamma>0$, $\lambda>0$
            \FOR {$k = 1,2,...$}
            \STATE $\zbm^k= \prox_{\gamma g}(\xbm^{k-1} - \sbm^{k-1})$
            \STATE $\xbm^k= \Scal_{\gamma \lambda}(\zbm^{k-1} + \sbm^{k-1})$
            \STATE $\sbm^k = \sbm^{k-1} + \xbm^{k} - \zbm^{k}$
            \ENDFOR
         \end{algorithmic}
     \end{algorithm}
\end{minipage}
\end{figure*}

\section{Approximating Total Variation} \label{section: approx tv}

\subsection{Method} \label{section: method}

In this section, we present the approximation of the TV proximal operator proposed in~\cite{Kamilov2017, Kamilov2016a}.
The motivation for this approximation is the reduction of the computational complexity of calculating the TV proximal operator from $O(knd)$, where $k$ is the number of sub-iterations, to $O(nd)$.
Consider the linear mapping $\Wbm: \R^{\sizeConst} \rightarrow \R^{2  \sizeConst \dimConst}$ defined as
\begin{align*}
    \Wbm := \frac{1}{2 \sqrt{d}} \begin{bmatrix}
            \Mbm \vspace{3pt} \\
            \Dbm
    \end{bmatrix} , \qquad \begin{aligned}
        \Mbm & := \big[\Mbm_1^{\Tsf} ~ \cdots ~\Mbm_d^{\Tsf} \big]^\Tsf, \\ 
        \Dbm & := \big[\Dbm_1^{\Tsf} ~ \cdots ~ \Dbm_d^{\Tsf} \big]^\Tsf,
    \end{aligned} 
\end{align*}
where $\Mbm_j:\ \R^\sizeConst \rightarrow \R^{\sizeConst  }$  for $j \in \{1, ..., \dimConst\}$ represents an averaging operator along the $j^{\text{th}}$ dimension, and $\Dbm_j:\ \R^{\sizeConst} \rightarrow \R^{\sizeConst }$  for $j \in \{1, ..., \dimConst\}$  denotes a discrete gradient operator along the $j^{\text{th}}$ dimension. 
$\Mbm_j$ and $\Dbm_j$ are convolutions along the $j^{\text{th}}$ dimension of the signal with the kernels
\begin{align*}
    \mbm = \begin{bmatrix}
            1 \vspace{3pt} \\
            1
    \end{bmatrix}, \qquad \dbm =\begin{bmatrix}
            1 \vspace{3pt} \\
            -1
    \end{bmatrix},
\end{align*}
respectively.
In this work, we assume periodic boundary conditions for both $\Mbm$ and $\Dbm$. Note that this is not essential for the method or the results in this paper: aperiodic boundary conditions are perfectly acceptable.
The mapping $\Wbm$  can be interpreted as a union of scaled and shifted first-level discrete Haar wavelet and scaling functions across all $\dimConst$ dimensions~\cite{Kamilov2017}.
It is important to note that this operator satisfies  $\Wbm^{\Tsf} \Wbm = \bm{I}$, but due to redundancy in the union of scaled orthogonal transforms, $\Wbm \Wbm^{\Tsf}$ is over-determined and so $\Wbm \Wbm^{\Tsf} \neq \bm{I}$~\cite{Elad.etal2007} (Section 3, first paragraph).

The approximate TV proximal operator is defined as
\begin{align}\label{eq: operator def}
    \approxTV \defn \Wbm^{\Tsf} \shrinkageFunction[][\tau \hspace{0.15em} \shrinkStepConst]{\Wbm \zbm},
\end{align}
where $\Tcal_{\lambda}$ is an operator with the parameter $\lambda \defn \tau \hspace{0.15em} \shrinkStepConst >0$, corresponding to the application of component-wise shrinkage functions on the scaled difference components of $\Wbm \zbm$.
    For the anisotropic operator $\Tcal^{\hspace{0.15em} a}_{\tau 2 \sqrt{d}}$, the shrinkage function
\begin{equation} \label{eq: aniso soft threshold}
    \max( | u | - \lambda, 0 ) \frac{u}{ | u | },
\end{equation} 
is applied to each difference component; that is, $u = [\Wbm \zbm]^{\text{dif}}_i \in \R$ for $i \in \{1,2,...,\sizeConst\dimConst\}$. 
For the isotropic operator $\Tcal^{\hspace{0.15em} i}_{\tau 2 \sqrt{d}}$, the shrinkage function
\begin{equation} \label{eq: iso soft threshold}
    \max( \| \ubm \|_2 - \lambda, 0 ) \frac{\ubm}{ \| \ubm \|_2 },
\end{equation}
is applied to each finite difference vector corresponding to each pixel in $\zbm$; that is, $\ubm = [\Wbm \zbm]^{\text{dif}}_{G_i} \in \R^{\dimConst}$ for $i \in \{ 1,2, \cdots, n\}$.
Each $G_i$ contains the $\dimConst$ indices in $[\Wbm \zbm]^{\text{dif}}$ corresponding to the components of the finite difference vector at pixel $i$.

Note that the shrinkage parameter $\tau$ must be scaled by  $\shrinkStepConst$ due to the scaling of $\Wbm$. 
 For simplicity, we omit the superscripts  $\text{a}$  and $\text{i}$ 
  when we do not need to distinguish between 
 the anisotropic and isotropic thresholding functions, and the corresponding TV functions.
 The following results apply to both versions.

\subsection{Theoretical Analysis} \label{section: analysis}

We now present the theoretical analysis $\approxTV$ as an approximation of the TV proximal operator.
The detailed proofs are provided in the Appendix~\ref{section: appendix}. 

\begin{proposition} \label{prop: proximal of another function}
    For $\tau > 0$, there exists a proper, closed, and convex function
    $\approxh$
    such that $\approxTV = \prox_{\approxh}(\zbm)$ for all $\zbm \in \R^{\sizeConst}$.
    
\end{proposition}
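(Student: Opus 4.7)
The plan is to invoke the classical characterization of proximal operators due to Moreau: a mapping $T:\R^{\sizeConst}\to\R^{\sizeConst}$ equals $\prox_{\approxh}$ for some proper, closed, convex $\approxh$ if and only if (i) $T$ is nonexpansive and (ii) $T=\nabla\phi$ for some convex function $\phi:\R^{\sizeConst}\to\R$. The proof will reduce to verifying (i) and (ii) for $\approxTV$.

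To set up (ii), I first identify the wavelet-domain shrinkage $\Tcal_{\tau\cdot 2\sqrt{\dimConst}}$ on $\R^{2\sizeConst\dimConst}$ as itself a proximal operator. Splitting the codomain of $\Wbm$ into an ``averaging'' block (image of $\Abm/(2\sqrt{\dimConst})$) and a ``detail'' block (image of $\Dbm/(2\sqrt{\dimConst})$), $\Tcal_{\tau\cdot 2\sqrt{\dimConst}}$ acts as the identity on the averaging block and as anisotropic (resp.\ isotropic) soft-thresholding on the detail block. These are exactly $\prox_{g}$ for $g(\ubm)=\tau\cdot 2\sqrt{\dimConst}\,\|\ubm_{D}\|_{1}$ in the anisotropic case and for the corresponding group $\ell_{1}$ penalty in the isotropic case, where $\ubm_{D}$ denotes the detail coefficients; in either case $g$ is proper, closed, and convex.

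Because $g$ is proper, closed, and convex, its Moreau envelope $M_{g}$ is convex and continuously differentiable with $\nabla M_{g}=\prox_{g}=\Tcal_{\tau\cdot 2\sqrt{\dimConst}}$. The chain rule then gives
\begin{equation*}
\approxTV \;=\; \Wbm^{\Tsf}\,\nabla M_{g}(\Wbm\zbm) \;=\; \nabla\phi(\zbm), \qquad \phi(\zbm)\defn M_{g}(\Wbm\zbm),
\end{equation*}
and $\phi$ is convex as the composition of a convex function with a linear map, establishing (ii). For (i), $\prox_{g}$ is $1$-Lipschitz (being a proximal operator of a convex function), and the identity $\Wbm^{\Tsf}\Wbm=\Ibm$ implies $\|\Wbm\|=\|\Wbm^{\Tsf}\|=1$, so the composition $\Wbm^{\Tsf}\circ\prox_{g}\circ\Wbm$ is also nonexpansive. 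Moreau's theorem then supplies the required $\approxh$.

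The step that requires the most care is identifying the right convex potential $g$ for $\Tcal_{\tau\cdot 2\sqrt{\dimConst}}$: because the thresholding acts nontrivially only on the detail block, $g$ must take the form of a sparsity penalty on that block plus the zero function on the averaging block, and the factor $2\sqrt{\dimConst}$ in the threshold must be matched against the $1/(2\sqrt{\dimConst})$ scaling in $\Wbm$ so that $\prox_{g}$ reproduces $\Tcal_{\tau\cdot 2\sqrt{\dimConst}}$ exactly. Once this splitting is written out explicitly, the remaining steps follow from standard calculus for proximal operators and Moreau envelopes; the invocation of Moreau's characterization of prox maps is the non-elementary ingredient on which the argument pivots.
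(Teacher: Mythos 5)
Your overall route is the same as the paper's: both arguments rest on Moreau's characterization of proximity operators (nonexpansiveness plus being the (sub)gradient of a convex function), both recognize the shrinkage $\Tcal_{2\tau\sqrt{\dimConst}}$ as the proximal operator of a group sparsity penalty supported on the detail block (the paper's $\helperh$, scaled by $\tau$), both transport the convex potential through $\Wbm$ by the chain rule for a convex function composed with a linear map, and both deduce nonexpansiveness from $\Wbm^{\Tsf}\Wbm=\Ibm$ (hence $\|\Wbm\|_2=\|\Wbm^{\Tsf}\|_2=1$) together with nonexpansiveness of soft-thresholding. The only structural difference is that the paper obtains the convex potential for $\Tcal$ abstractly, by a second appeal to Moreau's corollary, whereas you attempt to exhibit it explicitly via the Moreau envelope.

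That explicit step contains an error. For a proper, closed, convex $g$, the Moreau envelope $M_{g}(\ubm)=\min_{\vbm}\big\{g(\vbm)+\tfrac{1}{2}\|\ubm-\vbm\|_2^2\big\}$ is indeed convex and continuously differentiable, but its gradient is the proximal \emph{residual}, $\nabla M_{g}(\ubm)=\ubm-\prox_{g}(\ubm)$, not $\prox_{g}(\ubm)$ itself; taking $g\equiv 0$ gives $M_{g}\equiv 0$ and $\nabla M_{g}=\zerobm$ while $\prox_{g}$ is the identity, so the identity $\nabla M_{g}=\prox_{g}$ on which your step (ii) pivots is false. The correct potential is $\phi(\ubm)=\tfrac{1}{2}\|\ubm\|_2^2-M_{g}(\ubm)=\big(g+\tfrac{1}{2}\|\cdot\|_2^2\big)^{*}(\ubm)$, which is convex (being a Fenchel conjugate) and differentiable with $\nabla\phi=\prox_{g}$. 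Substituting this $\phi$ repairs the argument with no other changes: $\approxTV=\Wbm^{\Tsf}\nabla\phi(\Wbm\zbm)=\nabla[\phi\circ\Wbm](\zbm)$ with $\phi\circ\Wbm$ convex and closed, and your nonexpansiveness chain is correct as written. So the architecture is sound and matches the paper, but as stated the claimed potential $M_{g}\circ\Wbm$ is not the right one and the key identity must be corrected before Moreau's theorem can be invoked.
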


\begin{figure*}[t]
 \centering
 \includegraphics[scale=0.9]{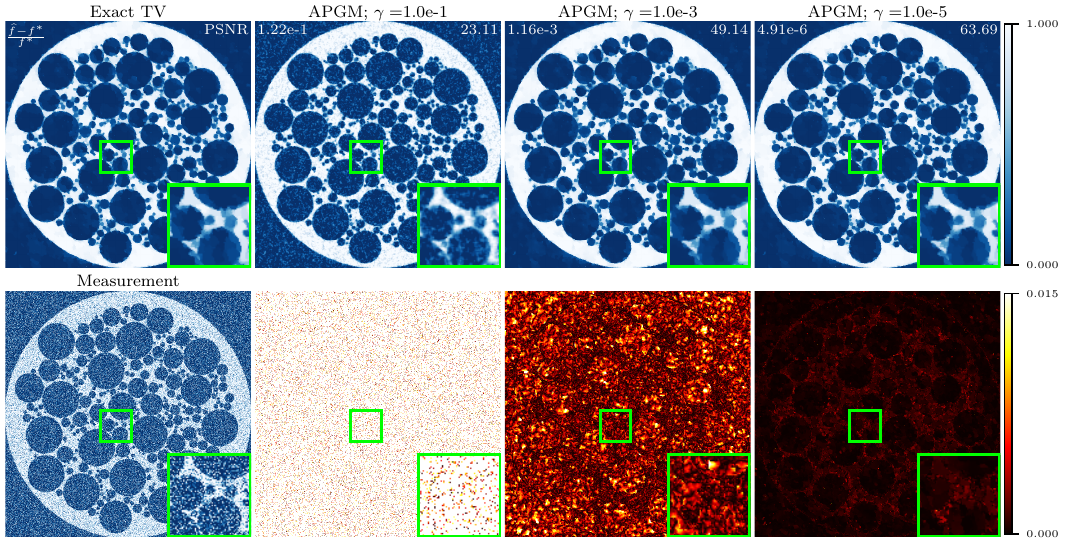}
 \caption{
    Effect of $\tau$ on image denoising performance when using $\approxTV$ in APGM compared to the exact TV reconstruction with regularization $\lambda = 0.5$.
    As written in Algorithm~\ref{alg: approx APGM}, $\tau = \gamma \lambda$, where $\gamma$ is the step-size.
    The top-left shows the relative cost of the approximate TV reconstruction to the exact TV reconstruction.
    The PSNR and difference images are relative to the exact TV reconstruction.
    Following Proposition~\ref{prop: main prop}, a smaller $\tau$ results in a smaller error.
 }
\label{fig: denoise comparison}
\end{figure*}

Proposition~\ref{prop: proximal of another function} establishes the existence of a function $\approxh$ for which $\approxTV$ is the proximal operator.
Although the explicit form of  $\approxh$ is unknown, its existence implies that $\approxTV$ can be incorporated into any proximal optimization method with the expectation of algorithm convergence. 

We will next demonstrate that $\approxTV$ provides a close approximation of the proximal operator of TV.
First, we show that it is related to the following smoothed versions of $h$.
In the anisotropic case, define
\begin{equation} \label{eq: huberized anisotropic}
   \widetilde{h}^a(\zbm) = \sum_{i=1}^{nd} \phi_{\tau 4 d}( [\Dbm \zbm ]_i),
\end{equation}
where $\phi_{\theta}: \R \rightarrow \R$ is defined as
\begin{equation} \label{eq: huberized anisotropic helper}
    \phi_{\theta} (y) = \begin{cases}
        \frac{y^2}{2 \theta} & \text{for } |y| \leq \theta \\
        |y | - \frac{\theta}{2} & \text{for } |y| > \theta
    \end{cases}.
\end{equation}
For the isotropic case, define
\begin{equation} \label{eq: huberized isotropic}
   \widetilde{h}^i(\zbm) = \sum_{i=1}^{n} \psi_{\tau 4 d}( [\Dbm \zbm ]_{G_i}),
\end{equation}
where for $\ybm \in \R^d$ we define
\begin{equation} \label{eq: huberized isotropic helper}
    \psi_{\theta} (\ybm) = \begin{cases} 
        \frac{\| \ybm \|^2}{2 \theta} & \text{for } \| \ybm \| \leq \theta \\
        \| \ybm \| - \frac{\theta}{2} & \text{for } \| \ybm \| > \theta 
    \end{cases}.
\end{equation}
Note that each $G_i$ is the collection of $d$ indices of the vector $\Dbm \zbm$ corresponding to the finite difference vector at position $i$ of $\zbm$.
Related approximations have previously appeared in the TV literature~\cite{Chambolle.Pock2011, selesnick2017, Hong.etal2020, xu.etal2018, Pock.etal2010, Pagliari.etal2022}, based on the Huber function~\cite{huber1964} 
or Moreau-Yoshida regularization~\cite{Moreau.1965, yosida1963}.
\begin{proposition}\label{prop: equiv to huber}
    Let $\widetilde{h}^a(\zbm)$ and $\widetilde{h}^i(\zbm)$ denote the Huber variants of anisotropic TV and isotropic TV given in Eq.~\eqref{eq: huberized anisotropic} and Eq.~\eqref{eq: huberized isotropic}, respectively.
    Then,
    \[\approxTV[a]=\zbm - \tau \nabla \widetilde{h}^a(\zbm) \quad \text{and} \quad \approxTV[i]=\zbm - \tau \nabla \widetilde{h}^i(\zbm).
    \]
\end{proposition} 
Proposition \ref{prop: equiv to huber} states that the operator $\approxTV$ is equivalent to a gradient step with step-size $\tau$ on a smoothed version of TV.
A recent work \cite{Kowalski2024} demonstrates that running a single iteration of the dual forward backward algorithm to solve the TV proximal within a FPG algorithm results in an algorithm that converges to a solution of the TV minimized problem.
Proposition \ref{prop: equiv to huber} shows that $\approxTV$ is related; however, instead of running one iteration of the dual-FB algorithm, $\approxTV$ corresponds to one iteration of gradient descent on a smoothed version of TV.
To decrease 
Eq.~\eqref{eq: huberized anisotropic} or Eq.~\eqref{eq: huberized isotropic}, the step-size must be at most $1/L$, where $L$ is the Lipschitz constant of the function's gradient~\cite{boyd.vandenberghe.2004}~(Section 9.3).
For both Eq.~\eqref{eq: huberized anisotropic} and Eq.~\eqref{eq: huberized isotropic}, the Lipschitz constant is $1/{\tau}$, as verified in the Appendix, satisfying the condition.

To further describe the operator as an approximation of the true proximal operator for $\TV$, we introduce the concept of the subdifferential~\cite{Moreau.1965, Rockafellar.1970}.
\begin{definition} \label{def: subdifferential}
    The subdifferential of the convex function $f: \R^{\sizeConst} \rightarrow (-\infty, +\infty]$ at $\ybm \in \text{dom}(f)$ is defined to be
    \begin{align*}
        \partial f(\ybm) := \big\{ \gbm \in \R^{\sizeConst}: f(\xbm) \geq f(\ybm) + \gbm^{\Tsf} (\xbm - \ybm) , \hspace{0.15em} \forall \xbm \in \R^{\sizeConst}  \big\}.
    \end{align*}
\end{definition}
A relaxation of this definition is called the $\epsilon$-subdifferential~\cite{Brondsted.Rockafellar1965} and defined as
\begin{definition} \label{def: epsilon subdifferential}
    The $\epsilon$-subdifferential of the convex function $f: \R^n \rightarrow (-\infty, +\infty]$ at $\ybm \in \text{dom}(f)$ is defined to be
    \begin{align*}
        \partial_{\epsilon}&f(\ybm) 
        \defn \big\{ \gbm \in \R^{\sizeConst}: f(\xbm) \geq f(\ybm) + \gbm^{\Tsf} (\xbm - \ybm) - \epsilon, 
             \forall \xbm \in \R^{\sizeConst}  \big\}.
    \end{align*}
\end{definition}
\noindent By comparing Definition~\ref{def: epsilon subdifferential} to Definition~\ref{def: subdifferential}, it becomes evident  that the traditional subdifferential is a subset of the $\epsilon$-subdifferential.
The $\epsilon$-subdifferential represents the subdifferentials relaxed by $\epsilon$.

Substantial work has been done to define and analyze various notions of approximate proximal operators~\cite{Rockafellar1976, Schmidt.etal2011, Nesterov2023, Salzo.Villa2012}. 
In Proposition~\ref{prop: main prop}, we prove that the operator $\approxTV$ can be characterized using several of the common  definitions for an approximate proximal operator~\cite{Salzo.Villa2012}.

\begin{proposition} \label{prop: main prop}
    For all $\zbm \in \R^{\sizeConst}$, the operator $\approxTV$ satisfies
    \begin{enumerate}[label=(\alph*)]
        \item $\approxTV = \prox_{\tau \TV}(\zbm + \deltabm)$, where $\| \deltabm \|_2 \leq \tau \operatorBoundConst$, \label{prop: type 3}
        \vspace{0.5em}
        \item $\zbm - \approxTV \in \tau \partial_{\tau \operatorSubdiffConst} \TV(\approxTV)$, \label{prop: type 2}
    \end{enumerate}
    where $\epsilon_1$ and $\epsilon_2$ are constants.
\end{proposition}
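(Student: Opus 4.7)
My plan is to prove the three conclusions of Proposition~\ref{prop: main prop} by exploiting three structural facts about $\approxTV = \Wbm^\Tsf\Tcal_{2\sqrt{d}\tau}(\Wbm\zbm)$: the tight-frame identity $\Wbm^\Tsf\Wbm = \Ibm$ (which unpacks into $\Abm^\Tsf\Abm + \Dbm^\Tsf\Dbm = 4d\,\Ibm$), the fact that the shrinkage leaves the $\Abm\zbm$ scaling block unchanged, and the entrywise bound $\|\Wbm\zbm - \Tcal_\lambda(\Wbm\zbm)\|_\infty \le \lambda \defn 2\sqrt{d}\,\tau$. Writing $\xbm = \approxTV$, $\wbm = \tfrac{1}{2\sqrt{d}}\Dbm\zbm$, and $\vbm = \Tcal_\lambda(\wbm)$, a short calculation using the tight-frame identity yields
\[
    \tfrac{1}{2\sqrt{d}}\,\Dbm\xbm \;=\; \wbm + M(\vbm - \wbm), \qquad M \defn \tfrac{1}{4d}\Dbm\Dbm^\Tsf,
\]
where $M$ is the lower-right block of the orthogonal projection $\Wbm\Wbm^\Tsf$, hence $0\preceq M\preceq\Ibm$. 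This identity will be the workhorse for part~(a).

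For part~(a), I would combine the above formula with the sign/group structure of the shrinkage residual $\wbm-\vbm$: in the anisotropic case, $(\wbm-\vbm)_k = \operatorname{sign}(\wbm_k)\min(|\wbm_k|,\lambda)$ is sign-reversed to $\wbm_k$ with smaller magnitude, so $\wbm + M(\vbm-\wbm)$ is a PSD-contracted, sign-aligned decrement of $\wbm$; in the isotropic case the per-pixel residual is parallel to the gradient vector with shorter length. I expect the bound $\|(\Ibm-M)\wbm + M\vbm\| \le \|\wbm\|$ (in $\ell_1$ resp.\ group-$\ell_{2,1}$) to follow by direct expansion and grouping. If the direct argument turns out too delicate, I would fall back on the proximal-average reformulation $\approxTV = \tfrac{1}{d}\sum_{j=1}^{d}V_j(\zbm)$, where $V_j = U_j^\Tsf\Tcal_{\lambda'}U_j$ is the 1D non-decimated Haar approximation along direction $j$, and combine convexity of $h$ with a per-direction decrement established at the 1D level.

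For parts~(b) and~(c) I would invoke the Fermat characterization $\xbm = \prox_{\tau h}(\ubm)\Leftrightarrow \ubm - \xbm \in \tau\partial h(\xbm)$ and its $\epsilon$-analog. For (b), pick any $\gbm\in\tau\partial h(\xbm)$ and set $\deltabm \defn \gbm - (\zbm - \xbm)$; then by construction $\zbm + \deltabm - \xbm = \gbm\in\tau\partial h(\xbm)$, so $\xbm = \prox_{\tau h}(\zbm+\deltabm)$. The bound $\|\deltabm\|_2\le \tau\operatorBoundConst$ will come from uniform boundedness of $\partial h$ (since $h$ is $\|\Dbm\|_{\mathrm{op}}$-Lipschitz) combined with $\|\zbm - \xbm\|_2 = \|\Wbm^\Tsf(\Wbm\zbm - \Tcal_\lambda(\Wbm\zbm))\|_2 \le \lambda\sqrt{nd} = O(\tau)$, obtained from the isometry $\Wbm^\Tsf\Wbm=\Ibm$ and the entrywise shrinkage bound. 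For (c), Proposition~\ref{prop: proximal of another function} already yields $\zbm - \xbm \in \partial\approxh(\xbm)$ exactly, and it remains to convert this exact subgradient of $\approxh$ into a $(\tau\operatorSubdiffConst)$-subgradient of $\tau h$ by controlling $\approxh - \tau h$ on the relevant set. Equivalently, one can verify Definition~\ref{def: epsilon subdifferential} directly by inserting a true subgradient $\gbm\in\tau\partial h(\xbm)$ into the exact inequality $\tau h(\ybm)\ge \tau h(\xbm) + \gbm^\Tsf(\ybm-\xbm)$ and absorbing the discrepancy $(\gbm-(\zbm-\xbm))^\Tsf(\ybm-\xbm)$ into the $\tau\operatorSubdiffConst$ slack using Cauchy--Schwarz together with the bounds from (b).

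The hardest step is expected to be part~(a), particularly the isotropic case. The PSD bound $0\preceq M\preceq \Ibm$ alone does not entail entrywise contraction in $\ell_1$---$M$ has negative off-diagonal entries---so the argument must exploit the precise sign-alignment between $\wbm-\vbm$ and $\wbm$. This is clean coordinatewise for anisotropic TV but more delicate for isotropic TV, where each per-pixel block residual must remain group-aligned after being spread across coordinates by $M$. A secondary difficulty is obtaining the $\operatorSubdiffConst$ bound in (c) uniformly over $\ybm\in\R^{\sizeConst}$ without assuming a bounded domain; resolving this will likely require playing the $O(\tau)$ smallness of $\zbm-\xbm$ against the linear growth of the TV.
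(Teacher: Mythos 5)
Your plan for part (b) is essentially the paper's argument — take a bounded subgradient $\gbm\in\tau\partial\TV(\approxTV)$, set $\deltabm=\gbm-(\zbm-\approxTV)$, and control $\|\deltabm\|_2$ by a uniform subgradient bound plus $\|\zbm-\approxTV\|_2\le 2\tau\dimConst\sqrt{\sizeConst}$ (which follows, as you say, from $\Wbm^{\Tsf}\Wbm=\Ibm$ and the entrywise shrinkage bound) — so that part is sound. The problems are in (a) and (c), and in both cases you have flagged the difficulty yourself without supplying the idea that resolves it.

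For (c), the route you propose would fail as stated: the discrepancy $(\gbm-(\zbm-\approxTV))^{\Tsf}(\ybm-\approxTV)=\deltabm^{\Tsf}(\ybm-\approxTV)$ grows linearly in $\|\ybm-\approxTV\|_2$, and Cauchy--Schwarz therefore cannot absorb it into a slack $\tau^2\operatorSubdiffConst$ that must be uniform over all $\ybm\in\R^{\sizeConst}$; likewise, converting an exact subgradient of the unknown $\approxh$ into an $\epsilon$-subgradient of $\tau\TV$ requires global control of $\approxh-\tau\TV$, which you do not have. The paper's device is to never take a subgradient of $\TV$ in $\R^{\sizeConst}$ at all: it starts from the \emph{exact} optimality condition of the shrinkage in the lifted space, $\Wbm\zbm-\ubm\in\tau\partial\helperh(\ubm)$ with $\ubm=\Tcal_{2\sqrt{\dimConst}\tau}(\Wbm\zbm)$, restricts the subgradient inequality to points $\wbm=\Wbm\ybm$, and uses $\Wbm^{\Tsf}\Wbm=\Ibm$ together with $\|\Wbm^{\Tsf}\|_2=1$ to collapse the inner product to $\frac{1}{\tau}(\zbm-\approxTV)^{\Tsf}(\ybm-\approxTV)$. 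The only error then incurred is $\TV(\approxTV)-\helperh(\ubm)$, a quantity \emph{independent of} $\ybm$, which is bounded by $4\tau\sizeConst\dimConst^2$ via the projection residual $\Wbm\Wbm^{\Tsf}\ubm-\ubm$ lying orthogonal to the range of $\Wbm$. Without some such $\ybm$-independent bookkeeping your part (c) does not close.

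For (a), your direct argument is left at the level of ``I expect the bound to follow,'' and the obstruction you identify is real: $0\preceq M\preceq\Ibm$ with $M=\frac{1}{4\dimConst}\Dbm\Dbm^{\Tsf}$ gives an $\ell_2$ contraction but not the $\ell_1$ or group-$\ell_{2,1}$ contraction that $\TV$ requires, and the sign-alignment of the shrinkage residual is destroyed once $M$ (which has negative off-diagonal entries) mixes coordinates. The paper's own proof handles this by writing the shrinkage as a componentwise multiplication $\alphabm\odot(\Wbm\zbm)$ with $0\le\alpha_i\le 1$ and then commuting $\alphabm\odot$ past $\Wbm\Wbm^{\Tsf}$ — a step you should scrutinize with the same skepticism you apply to your own argument, since a diagonal scaling does not in general commute with a projection. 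Your fallback via the proximal-average decomposition $\approxTV=\frac{1}{\dimConst}\sum_j \Usf_j^{\Tsf}\Tcal(\Usf_j\zbm)$ is a legitimate alternative, but it shifts the burden to proving that each one-dimensional Haar shrinkage decreases the full multi-dimensional TV (not just the TV along its own direction), which is itself nontrivial and is not sketched. As it stands, neither of your two routes for (a) is a proof.
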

Proposition \ref{prop: main prop} relies 
on notions of approximate proximal operators, commonly used in the literature. Specifically, Proposition\ref{prop: main prop}\,(b) aligns with  a ``type 3 approximation'' in the proximal operator with $\tau \epsilon_1$-precision, while Proposition \ref{prop: main prop}\,(c) corresponds to a ``type 2 approximation'' with $\tau \sqrt{2 \epsilon_2}$-precision~\cite{Salzo.Villa2012}.

The first part of Proposition~\ref{prop: main prop} demonstrates that the operator functions as a perturbed proximal operator of TV.  Specifically, the output of the operator is equivalent to applying the true proximal operator of TV to a perturbed input vector $\zbm$, modified by a vector $\deltabm$.
Notably, the norm of the perturbation $\|\deltabm\|_2$ is bounded and can be controlled by adjusting $\tau$. From the optimality condition for the proximal operator of true TV regularizer $\prox_{ \tau \TV }(\zbm)$, there is the following well known relationship
\begin{align*}
    \zbm - \prox_{\tau \TV}(\zbm) \in \tau \partial \hspace{0.15em} \TV( \prox_{\tau \TV}(\zbm) ) \;.
\end{align*}
The second part of Proposition~\ref{prop: main prop} establishes an analogous connection for the closed-form approximate operator, characterized using the  $\epsilon$-subdifferential from Definition~\ref{def: epsilon subdifferential}, where $\epsilon$ is controlled by the proximal scaling parameter $\tau$.

Overall, Proposition~\ref{prop: main prop} implies that the $\approxh$ from Proposition~\ref{prop: proximal of another function} can be seen as an approximation of the true TV function $\TV$ with controllable accuracy. Notably, by reducing  $\tau$, one can achieve an arbitrarily close approximation to the true TV proximal operator. Therefore, our theoretical result supports the use of  $\approxTV$ as an approximation of $\prox_{\tau \TV}(\zbm)$. 
As the accuracy of the operator is controlled by $\tau$, it might initially appear that it would only be effective for problems with a very small regularization parameter  $\lambda$  for TV. However, this is not the case when $\approxTV$ is used within iterative algorithms such as APGM and ADMM. As shown in Algorithms~\ref{alg: approx APGM} and~\ref{alg: approx ADMM}, the shrinkage parameter for $\approxTV$  is set by  $\tau = \gamma \lambda$. Thus, one can enhance accuracy by adjusting  $\gamma $ rather than altering the TV regularization parameter  $\lambda$ . For APGM,  $\gamma$  represents the step size, while for ADMM, $ \gamma$  serves as the penalty parameter in the augmented Lagrangian.

 \begin{figure*}[t]
 \centering
 \includegraphics[scale=0.9]{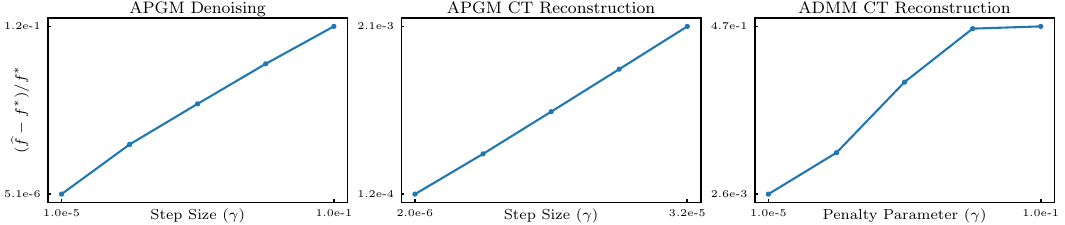}
 \caption{
    Effect of $\tau$ on the approximate reconstruction 
    performance $\widehat{f}$ using the $\approxTV$
    relative to the exact TV reconstruction $f^*$.
    $\approxTV$ is tested within the APGM and ADMM algorithms.
    For image denoising $\lambda=0.5$ and for CT reconstruction $\lambda=5$.
    In both cases, as written in Algorithms~\ref{alg: approx APGM} and~\ref{alg: approx ADMM}, $\tau = \gamma \lambda$, where $\gamma$ is the step-size in APGM and the penalty parameter in ADMM.
    As expected from Proposition~\ref{prop: main prop}, the smaller the $\tau$, the smaller the error.
 }
\label{fig: pattern compare}
\end{figure*}
\begin{table*}[t]
    \fontsize{5}{6}\selectfont
    \caption{The relative cost, PSNR with respect to the exact TV solution and groundtruth, and acceleration achieved by using $\approxTV$ (measured as the ratio of the number of iterations when using $\approxTV$ to total number of sub-iterations when using the exact TV proximal)
    for various regularization ($\lambda$) and step-size/penalty ($\gamma$) parameters for image denoising and CT reconstruction.
    For the APGM and ADMM algorithms,  $\tau=\gamma \lambda$, where $\gamma$ determines the accuracy of $\approxTV$.
    The metrics have been averaged across the 10 images in the dataset.
    Note that $L$ is the Lipschitz constant of $\nabla g$.\label{table: average performance}}
    \renewcommand{\arraystretch}{1.25} %
    \centering
    {\fontfamily{ptm}\selectfont %
    \vspace{1mm}
    \begin{tabular}{|c|c c c c|c c c c|c c c c|}
    \multicolumn{13}{ c }{\rule{0pt}{3ex}\textbf{APGM Image Denoising} } \\
    
    \hline
    \multicolumn{1}{|c|}{} & \multicolumn{4}{c|}{\underline{$\lambda = 0.25$}} & \multicolumn{4}{c|}{\underline{$\lambda = 0.5$}} & \multicolumn{4}{c|}{\underline{$\lambda = 1$}} \\
    \textbf{Step Size ($\gamma$)} & \textbf{Rel. Err.} & \textbf{PSNR (TV)} & \textbf{PSNR (GT)} & \textbf{Accel} & \textbf{Rel. Err.} & \textbf{PSNR (TV)} & \textbf{PSNR (GT)} & \textbf{Accel} & \textbf{Rel. Err.} & \textbf{PSNR (TV)} & \textbf{PSNR (GT)} & \textbf{Accel} \\
    \hline
    1e-1 & 1.751e-02 & 31.97 & 16.44 & $2.32\times$ & 1.214e-01 & 23.14 & 17.57 & $9.58\times$ & 3.875e-01 & 19.65 & 17.38 & $19.86\times$ \\
    1e-2 & 1.364e-03 & 48.02 & 16.72 & $0.41\times$ & 1.281e-02 & 34.52 & 18.57 & $1.61\times$ & 5.467e-02 & 25.82 & 15.68 & $2.67\times$ \\
    1e-3 & 1.152e-04 & 65.90 & 16.72 & $0.09\times$ & 1.157e-03 & 49.16 & 18.75 & $0.38\times$ & 5.945e-03 & 36.31 & 15.96 & $0.67\times$ \\
    \hline

    \multicolumn{13}{ c }{\rule{0pt}{3ex}\textbf{APGM CT Reconstruction} } \\

    \hline
    \multicolumn{1}{|c|}{} & \multicolumn{4}{c|}{\underline{$\lambda = 2.5$}} & \multicolumn{4}{c|}{\underline{$\lambda = 5$}} & \multicolumn{4}{c|}{\underline{$\lambda = 10$}} \\
    \textbf{Step Size ($\gamma$)} & \textbf{Rel. Err.} & \textbf{PSNR (TV)} & \textbf{PSNR (GT)} & \textbf{Accel} & \textbf{Rel. Err.} & \textbf{PSNR (TV)} & \textbf{PSNR (GT)} & \textbf{Accel} & \textbf{Rel. Err.} & \textbf{PSNR (TV)} & \textbf{PSNR (GT)} & \textbf{Accel} \\
    \hline
    1/(1$L$) & 9.473e-04 & 57.28 & 21.07 & $49.50\times$ & 2.069e-03 & 52.08 & 20.86 & $49.21\times$ & 4.588e-03 & 46.48 & 20.41 & $49.71\times$ \\
    1/(2$L$) & 4.609e-04 & 59.46 & 21.06 & $39.99\times$ & 1.002e-03 & 56.87 & 20.88 & $40.20\times$ & 2.208e-03 & 51.84 & 20.48 & $39.21\times$ \\
    1/(4$L$) & 2.264e-04 & 62.14 & 21.07 & $33.51\times$ & 4.874e-04 & 60.42 & 20.88 & $32.99\times$ & 1.068e-03 & 56.79 & 20.51 & $31.09\times$ \\
    \hline

    \multicolumn{13}{ c }{\rule{0pt}{3ex}\textbf{ADMM CT Reconstruction}} \\
    \hline
    
    \multicolumn{1}{|c|}{} & \multicolumn{4}{c|}{\underline{$\lambda = 2.5$}} & \multicolumn{4}{c|}{\underline{$\lambda = 5$}} & \multicolumn{4}{c|}{\underline{$\lambda = 10$}} \\
    \textbf{Pen. Par. ($\gamma$)} & \textbf{Rel. Err.} & \textbf{PSNR (TV)} & \textbf{PSNR (GT)} & \textbf{Accel} & \textbf{Rel. Err.} & \textbf{PSNR (TV)} & \textbf{PSNR (GT)} & \textbf{Accel} & \textbf{Rel. Err.} & \textbf{PSNR (TV)} & \textbf{PSNR (GT)} & \textbf{Accel} \\
    \hline
    1e-2 & 3.545e-01 & 18.49 & 15.35 & $260.90\times$ & 4.341e-01 & 18.10 & 15.13 & $157.57\times$ & 4.680e-01 & 18.28 & 15.13 & $89.96\times$ \\
    1e-3 & 3.767e-02 & 31.63 & 19.82 & $62.75\times$ & 8.291e-02 & 26.42 & 18.59 & $69.34\times$ & 1.788e-01 & 21.91 & 16.82 & $69.53\times$ \\
    1e-4 & 4.850e-03 & 40.42 & 20.59 & $11.43\times$ & 9.284e-03 & 38.55 & 20.37 & $10.24\times$ & 1.989e-02 & 35.13 & 19.82 & $9.90\times$ \\
    \hline
    \end{tabular}
    }
\end{table*}

\section{Numerical Validation}

\label{section: experiments}
In this section, we numerically validate Proposition~\ref{prop: main prop} by comparing the reconstruction performance of iterative proximal algorithms when $\approxTV$ replaces the true proximal operator for isotropic total variation.
For experimental validation of the efficiency and timing of these operators, we refer the reader to \cite{Kamilov2016, Kamilov2017, Kamilov2016a}.
Specifically, we run APGM and ADMM on a simulated low-angle computed tomography (CT) image reconstruction problem.
The APGM and ADMM algorithms using $\approxTV$ instead of the true proximal operator are shown in Algorithms~\ref{alg: approx APGM} and~\ref{alg: approx ADMM}, respectively.
The accuracy of $\approxTV$ can be directly controlled by $\gamma$ since $\tau = \gamma \lambda$, where  $\gamma$ is the step size and penalty parameter, for APGM and ADMM algorithms, respectively.
To calculate the true proximal operator, we use the iterative Fast Projected Gradient (FPG) algorithm~\cite{Beck.Teboulle2009a}.
Additionally, we investigate the performance of the closed-form approximate operator relative to true proximal operator of TV for an image denoising task. 
The numerical evaluations are performed using 10 piecewise-smooth foam phantoms generated by the XDesign Python package~\cite{ching_gursoy.2017} and 
the implementations\footnote{See classes \href{https://scico.readthedocs.io/en/stable/\_autosummary/scico.functional.html\#scico.functional.AnisotropicTVNorm}{AnisotropicTVNorm} and \href{https://scico.readthedocs.io/en/stable/\_autosummary/scico.functional.html\#scico.functional.IsotropicTVNorm}{IsotropicTVNorm}.} of $\approxTV$ in the SCICO Python package~\cite{Balke.etal2022}.
All algorithms are run until the stopping criteria 
\begin{align} \label{stopping criteria}
    \frac{\|\xbm^t - \xbm^{t-1}\|_2}{\|\xbm^{t-1}\|_2} \leq 5 \times 10^{-6}
\end{align}
is met.

\subsection{Image Denoising} \label{section: image denoising}
In the case of image denoising, the forward model is $\Abm = \Ibm$, which results in the  data-fidelity term $g(\xbm) = \frac{1}{2} \| \ybm - \xbm\|_2^2$. 
We report the denoising results for the regularization parameters $0.25$, $0.5$, and $1$ in Table~\ref{table: average performance}.
Since FPG solves the denoising problem, it is directly used to compute the exact TV regularized denoised image.
The APGM algorithm with $\approxTV$ is used to compute the approximate TV regularized images.

\subsection{Computed Tomography} \label{section: computed tomography}

We validate the application of $\approxTV$ for the limited-angle computed tomography (CT) reconstruction. 
The data fidelity term in Eq.~\eqref{eq: main optimization problem} is 
$g(\xbm) = \frac{1}{2} \| \Abm \xbm - \ybm \|_2^2, $
where $\Abm$ represents the CT imaging forward operator using only $45$ angles, resulting in an ill-posed inverse problem.
We show the experimental results for the regularization parameters $\lambda$ corresponding to $2.5$, $5$, and $10$, using both APGM and ADMM. 
ADMM is implemented to solve the constrained optimization problem 
\[
\argmin_{\xbm, \zbm} g(\xbm) + \lambda h(\zbm) \quad \text{s.t.} \;\; \zbm = \xbm
\]
in order to test the impact of $\approxTV$.
The exact TV reconstruction is obtained by running 50 sub-iterations of FPG to calculate the exact proximal within both algorithms.

\subsection{Discussion} \label{section: discussion}
According to Proposition~\ref{prop: proximal of another function}, $\approxTV$ serves as the proximal operator for some convex function. 
Consequently, proximal-based reconstruction algorithms using this operator are guaranteed to converge.
Activation of the termination criteria in~\ref{stopping criteria} occurs for all the experiments, validating the convergence of the algorithms under $\approxTV$.

 \begin{figure*}[t]
 \centering
 \includegraphics[scale=0.9]{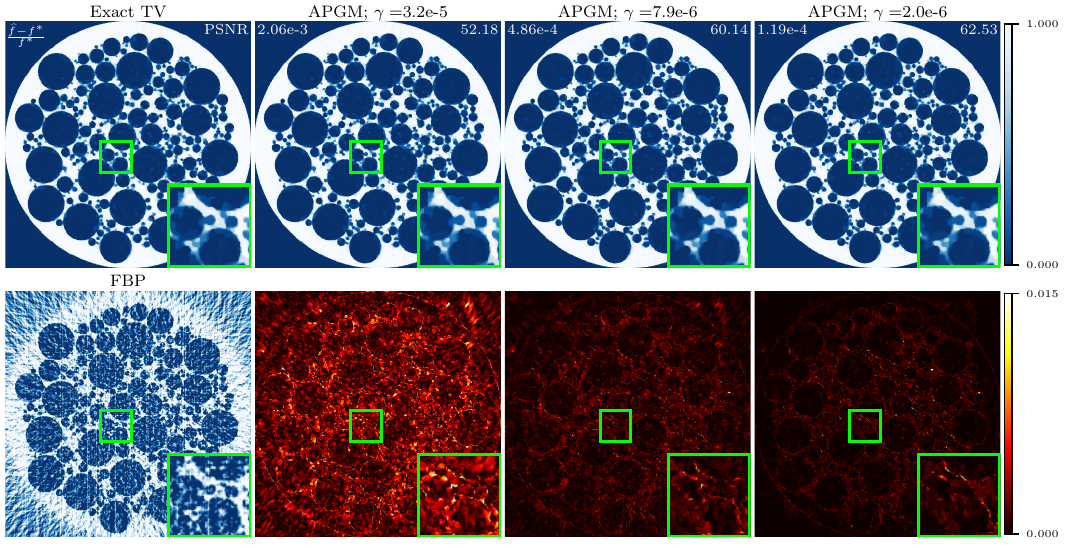}
 \caption{ 
    Effect of $\tau$ on limited angle Computed Tomography (CT) reconstruction using $\approxTV$ in APGM compared to the exact TV reconstruction with regularization $\lambda=5$.
    As written in Algorithm~\ref{alg: approx APGM}, $\tau = \gamma \lambda$, where $\gamma$ is the step-size.
    The top-left shows relative cost of the approximate reconstruction to the exact reconstruction.
    The PSNR and difference images are relative to the exact reconstruction.
    Following Proposition~\ref{prop: main prop}, a smaller $\tau$ results in a smaller error.
}
\label{fig: apgm reconstruction}
\end{figure*}

The statements in Proposition~\ref{prop: equiv to huber} and Proposition~\ref{prop: main prop} imply that a smaller $\tau$ in $\approxTV$ provides a better approximation with respect to the true TV reconstruction.
This relationship is experimentally validated in Table~\ref{table: average performance} across all experiments: a smaller  $\tau$ results in a smaller relative error in the approximate reconstruction loss $\widehat{f}$, compared to the true TV reconstruction loss $f^*$. 
Relative error, denoted as Rel. Err. in Table~\ref{table: average performance}, is defined to be
$$\frac{\widehat{f} - f^{*}}{f^{*}}.$$
Similarly, the PSNR with respect to the true TV reconstruction increases as $\tau$ decreases, suggesting $\approxTV$ provides a better approximation of the TV proximal operator. 
Table~\ref{table: average performance} also displays the trade-off between approximation accuracy and acceleration when using $\approxTV$ compared to $50$ iterations of FPG for the exact proximal.
``Accel'' is computed as the number of iteration using $\approxTV$ divided by total number of sub-iterations when using FPG for the exact proximal.
Figure~\ref{fig: pattern compare} illustrates the dependence of cost accuracy on parameter $\gamma$ for image denoising with $\lambda=0.5$ and CT image reconstruction with $\lambda=5$.
Figures~\ref{fig: denoise comparison},  ~\ref{fig: apgm reconstruction}, and~\ref{fig: admm reconstruction} illustrate visual comparison for image denoising and CT reconstruction using true TV proximal and $\approxTV$ operator with various $\gamma$. 
The visual results indicate that $\approxTV$ promotes piecewise smooth images, as is expected from an approximate TV proximal operator.
Both visual and numerical results from the experiments support the theoretical finding  presented in Section~\ref{section: analysis},  demonstrating that $\approxTV$ can approximate the TV proximal and be effectively integrated into proximal-based iterative algorithms to achieve approximate TV-regularized reconstructions.

 \begin{figure*}[t]
 \centering
 \includegraphics[scale=0.9]{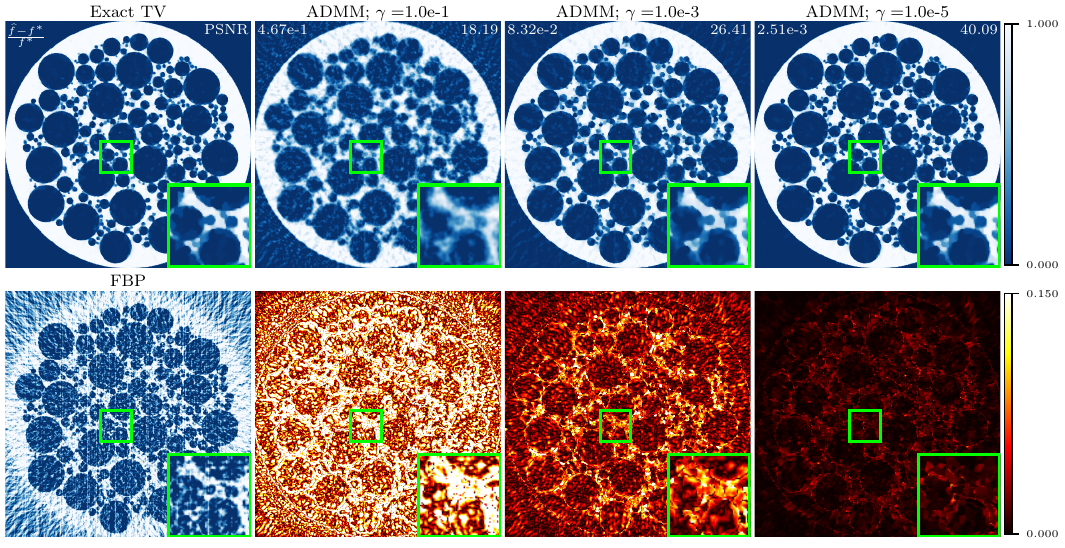}
 \caption{
    Effect of $\tau$ on limited angle Computed Tomography (CT) reconstruction using $\approxTV$ in ADMM compared to  exact reconstruction with regularization $\lambda=5$.
    As written in Algorithm~\ref{alg: approx ADMM}, $\tau = \gamma \lambda$, where $\gamma$ is the penalty parameter.
    The top-left shows relative cost of the approximate reconstruction to the exact reconstruction.
    The PSNR and difference images are relative to exact reconstruction.
    Following Proposition~\ref{prop: main prop}, a smaller $\tau$ results in a smaller error.
 }
\label{fig: admm reconstruction}
\end{figure*}
\section{Conclusion} \label{section: conclusion}
We theoretically analyzed a closed-form operator that approximates the proximal operator for both anisotropic and isotropic total variation. While these operators have been previously explored in the context of optimization algorithms, this work focuses on the operator itself. Specifically, we demonstrated that the operator consistently reduces the TV function of the input signal and that its error, relative to the true TV proximal operator, can be characterized using well-established notions of approximate proximal operators.
Through this analysis, we provide a foundation for employing $\approxTV$ within any proximal-based reconstruction method to address TV-regularized imaging inverse problems. Finally, we supported our theoretical findings with several experiments.

\section*{Acknowledgment}
The authors thank the anonymous reviewers for their
careful reading and comments that helped improve the final version of this manuscript.

\section{Appendix}\label{section: appendix}
In this section, we will prove Propositions ~\ref{prop: proximal of another function},~\ref{prop: equiv to huber}, and ~\ref{prop: main prop}.
The linear map $\Wbm: R^n \rightarrow \R^{2nd}$ is defined as
\begin{align*}
    \Wbm := \frac{1}{2 \sqrt{d}} \begin{bmatrix}
            \Mbm \vspace{3pt} \\
            \Dbm
    \end{bmatrix} , \qquad \begin{aligned}
        \Mbm & := \big[\Mbm_1^{\Tsf} ~ \cdots ~\Mbm_d^{\Tsf} \big]^\Tsf, \\ 
        \Dbm & := \big[\Dbm_1^{\Tsf} ~ \cdots ~ \Dbm_d^{\Tsf} \big]^\Tsf,
    \end{aligned} 
\end{align*}
where $\Mbm$ is the averaging map and $\Dbm$ is the finite difference map.
Therefore, for any vector $\ubm \in \R^{2nd}$ half of the  components correspond to the averaging map and the remaining components correspond to the finite difference map.
We use the notation $\ubm^{\text{dif}}$ to correspond to the difference coefficients.

Using this notation, we define the function $\helperh: \R^{2 \sizeConst \dimConst} \rightarrow \R$, where  $\helperh$ has an anisotropic and isotropic form, $\helperh[\text{a}]$ and $\helperh[\text{i}]$, respectively.
Let $G_1, ..., G_n$ be a partition of $\{1,2, ..., nd\}$ such that each $G_i$ is the set of indices corresponding to the finite difference of pixel $i$.
For $\ubm \in \R^{2 \sizeConst \dimConst}$,
\begin{align} %
    \helperh[\text{a}](\ubm) & := 2\sqrt{\dimConst}\| \ubm^{\text{dif}} \|_{1,1} = 2\sqrt{\dimConst} \sum_{i=1}^{\sizeConst} \| [\ubm^{\text{dif}}]_{G_i} \|_1 \label{eq: aniso helper h definition}  \\  
    \helperh[\text{i}](\ubm) & := 2\sqrt{\dimConst}\| \ubm^{\text{dif}} \|_{2,1}  = 2\sqrt{\dimConst} \sum_{i=1}^{\sizeConst} \| [\ubm^{\text{dif}}]_{G_i} \|_2 \;, \label{eq: iso helper h definition}
\end{align}
where $\| \ubm^{\text{dif}} \|_{p, 1}$ is a group norm, with $p$  either $1$ or $2$,   to anisotropic and isotropic TV, respectively.
The grouping is based on the  $\dimConst$  elements that represent the discrete differences at a specific location for each of the  $\dimConst$ dimensions, with the  $p$-norm applied within each group. 
Each $[\ubm^{\text{dif}}]_{G_i} \in \R^{d}$ is the vector 
at the indices given by the set $G_i$.
We  use
\begin{align}
    \helperh(\ubm) := 2\sqrt{\dimConst}\| \ubm^{\text{dif}} \|_{p, 1} \label{eq: helper h definition}
\end{align}
to refer to both the anisotropic and isotropic versions. Thus the results in Proposition~\ref{prop: proximal of another function} and Proposition~\ref{prop: main prop} hold for both anisotropic and isotropic versions.
We can also rewrite the TV definitions in Eq.~\eqref{eq: aniso TV def} and Eq.~\eqref{eq: iso TV def} in the general form
\begin{align}
    \TV(\zbm) = \| \Dbm \zbm \|_{p,1} \;. \label{eq: TV def}
\end{align}
The term $2\sqrt{\dimConst}$ in front of $\helperh$ is necessary so that the relationship 
\begin{align}
    \helperh(\Wbm \zbm) & = \TV(\zbm) \label{eq: helperh related to tv} 
\end{align}
holds.
Note that Eq.~\eqref{eq: helperh related to tv} 
implies that for $\zbm \in \R^n$, $\helperh$ evaluated at $\Wbm \zbm$ is equivalent to $\TV$ evaluated at $\zbm$; however, $\helperh$ is not the TV function itself since it is defined on the entire $\R^{2nd}$, and not $\R^n$.

Additionally, $\helperh$ is defined as the scaled version of two well-known norms that are proper, closed, and convex.
By using the notation $\Gamma^0$ to represent the set of proper, closed, and convex functions, we have $\helperh \in \Gamma^0(\R^{2nd})$.
Finally, define the linear subspace $\Ucal := \{\ubm | \ \ubm = \Wbm \Wbm^{\Tsf} \ubm \} \subset \R^{2 \sizeConst \dimConst}$.

\section*{Proof of Proposition~\ref{prop: proximal of another function}}

\begin{repproposition}{prop: proximal of another function}
    For $\tau > 0$, there exists $\approxh \in \Gamma^0(\R^{\sizeConst})$ such that $\approxTV = \prox_{\approxh}(\zbm)$ for all $\zbm \in \R^{\sizeConst}$.
\end{repproposition}
\begin{proof}
     To establish that $\approxTV$ is a proximal operator of some $\approxh \in \Gamma^0(\R^{\sizeConst})$, it is sufficient to show that 
    \begin{enumerate}
        \item There exists a closed and convex $\psi$ such that $\approxTV \in \partial \psi(\zbm),\ \forall \zbm \in \R^{\sizeConst}$.
        \item The operator $\approxTV$ is nonexpansive.
    \end{enumerate} 
    This equivalence was given in Corollary 10.c in~\cite{Moreau.1965}. 
    As defined in Eq.~\eqref{eq: helper h definition}, $\helperh$ is the scaled norm of the difference components of its input vector in $\R^{2nd}$.
    Therefore, the proximal operator of $\tau \helperh(\ubm)$
    can be written 
    using the thresholding function  defined in Eq.~\eqref{eq: aniso soft threshold} and Eq.~\eqref{eq: iso soft threshold}: 
    \begin{align}\label{eq: prox of hhat}
        \prox_{\tau \helperh}(\ubm) = \shrinkageFunction[][\tau \hspace{0.15em} \shrinkStepConst]{\ubm} \quad \forall \ubm\in \R^{2 \sizeConst \dimConst} \;. 
    \end{align} 
    By Corollary 10.c in~\cite{Moreau.1965}, there exists a closed and convex $\phi$ such that $\Tcal_{2 \tau \sqrt{\dimConst}}(\ubm) \in \partial \phi(\ubm)$.
    Plugging in $\ubm = \Wbm \zbm$ and multiplying both sides by $\Wbm^{\Tsf}$ gives us $\approxTV \in \Wbm^{\Tsf} \partial \phi(\Wbm \zbm)$.
    By the subdifferential chain rule, we know that 
    $\Wbm^{\Tsf} \partial \phi(\Wbm \zbm) \subset \partial [\phi \circ \Wbm](\zbm)$, where $\circ$ denotes composition of the functions.
    Therefore, we have
    \begin{align*}
        \approxTV \in \partial [\phi \circ \Wbm](\zbm) \;.
    \end{align*}
    Next, letting $\psi(\zbm) = [\phi \circ \Wbm](\zbm)$, we need to show $\psi$ is closed and convex.
    Since $\phi$ is closed, $\phi \circ \Wbm$ is also closed.
    Since $\phi$ is convex and $\Wbm$ is linear, $\phi \circ \Wbm$ is convex.
    We can show $\| \Wbm \|_2 = 1$ as follows
    \begin{align}
        \| \Wbm \|_2 
            & = \sqrt{\sigma_{\text{max}}(\Wbm^{\Tsf}\Wbm )}  \nonumber \\
            & = \sqrt{\sigma_{\text{max}}( \bm{I} )} = 1 \;, \label{eq: spectral norm W^T}
    \end{align}
    where $\sigma_{\text{max}}(\Ibm)$ is the largest eigenvalue of $\Ibm$.
    
    We now verify the nonexpansiveness of $\approxTV$.
    \begin{align*}
         & \| \Scal_{\tau}(\zbm_1) - \Scal_{\tau}(\zbm_2) \|_2 \\
         & \quad = \| \Wbm^{\Tsf} \shrinkageFunction[][\tau \hspace{0.15em} \shrinkStepConst]{\Wbm \zbm_1} - \Wbm^{\Tsf} \shrinkageFunction[][\tau \hspace{0.15em} \shrinkStepConst]{\Wbm \zbm_2} \|_2 \\
         & \quad  = \| \Wbm^{\Tsf} (\shrinkageFunction[][\tau \hspace{0.15em} \shrinkStepConst]{\Wbm \zbm_1} - \shrinkageFunction[][\tau \hspace{0.15em} \shrinkStepConst]{\Wbm \zbm_2}) \|_2 \\
         & \quad \leq \| \Wbm^{\Tsf} \|_2 \|\shrinkageFunction[][\tau \hspace{0.15em} \shrinkStepConst]{\Wbm \zbm_1} - \shrinkageFunction[][\tau \hspace{0.15em} \shrinkStepConst]{\Wbm \zbm_2} \|_2 \\
         & \quad = \|\shrinkageFunction[][\tau \hspace{0.15em} \shrinkStepConst]{\Wbm \zbm_1} - \shrinkageFunction[][\tau \hspace{0.15em} \shrinkStepConst]{\Wbm \zbm_2} \|_2 \\
         & \quad \leq \|\Wbm \zbm_1 - \Wbm \zbm_2 \|_2 \\
         & \quad \leq \|\Wbm \|_2 \|\zbm_1 - \zbm_2 \|_2 \\
         & \quad = \|\zbm_1 - \zbm_2 \|_2 \;, 
    \end{align*}
    Where the third and last equalities comes from Eq.~\eqref{eq: spectral norm W^T}. 
    The second inequality comes from the fact that soft-thresholding is a non-expansive operator.
    The above inequality is exactly the definition of a nonexpansive operator. 
\end{proof}

\section*{Proof of Proposition~\ref{prop: equiv to huber}}
\begin{repproposition}{prop: equiv to huber}
    Let $\widetilde{h}^a(\zbm)$ and $\widetilde{h}^i(\zbm)$ denote the Huber variants of anisotropic TV and isotropic TV given in Eq.~\eqref{eq: huberized anisotropic} and Eq.~\eqref{eq: huberized isotropic}, respectively.
    Then,
    \[\approxTV[a]=\zbm - \tau \nabla \widetilde{h}^a(\zbm) \quad \text{and} \quad \approxTV[i]=\zbm - \tau \nabla \widetilde{h}^i(\zbm).
    \]
\end{repproposition}
\begin{proof}
We will first consider anisotropic TV $\approxTV[a]$.
The soft-thresholding operator can be rewritten as 
\begin{equation*}
    \shrinkageFunction[\hspace{0.15em} a][\lambda]{u}  \defn \max( | u | - \lambda, 0 ) \frac{u}{ | u | } = u - \min(\lambda, |u|) \text{sgn}(u).
\end{equation*}
Since soft-thresholding only affects the difference components, it can be written as
\begin{align*}
    \approxTV & = \Wbm^{\Tsf} \Tcal^{\hspace{0.15em} a}_{\tau}(\Wbm \zbm)  \\
        & = \Wbm^{\Tsf}\left( \Wbm \zbm - \begin{bmatrix}
        0 \\
        \vdots \\
        0 \\
        \min \left(\tau 2 \sqrt{d}, | [\Wbm \zbm]^{\text{dif}}_1 | \right) \text{sgn}\left([\Wbm \zbm]^{\text{dif}}_1 \right) \\
        \vdots \\
        \min\left(\tau 2 \sqrt{d}, | [\Wbm \zbm]^{\text{dif}}_{nd} | \right) \text{sgn}\left([\Wbm \zbm]^{\text{dif}}_{nd}\right)
    \end{bmatrix}  \right) \\
    & = \zbm - \left[ \frac{\Mbm^{\Tsf}}{2 \sqrt{d}} \frac{\Dbm^{\Tsf}}{2 \sqrt{d}}
     \right] \begin{bmatrix}
        0 \\
        \vdots \\
        0 \\
        \min \left(\tau 2 \sqrt{d}, \frac{| [ \Dbm \zbm]_1 |}{2 \sqrt{d}} \right) \text{sgn}\left( \frac{[ \Dbm \zbm]_1}{2 \sqrt{d}} \right) \\
        \vdots \\
        \min\left(\tau 2 \sqrt{d}, \frac{| [\Dbm \zbm]_{nd} |}{2 \sqrt{d}}  \right) \text{sgn}\left( \frac{[ \Dbm \zbm]_{nd}}{2 \sqrt{d}} \right) 
    \end{bmatrix} \\
        & = \zbm - \frac{\Dbm^{\Tsf}}{2 \sqrt{d}} \begin{bmatrix}
        \min\left(\tau 2 \sqrt{d}, \frac{| [ \Dbm \zbm]_1 | }{2 \sqrt{d}} \right) \text{sgn}\left( \frac{[\Dbm \zbm]_1}{2 \sqrt{d}}   \right) \\
        \vdots \\
        \min \left(\tau 2 \sqrt{d}, \frac{| [\Dbm \zbm]_{nd} | }{2 \sqrt{d}} \right) \text{sgn}\left( \frac{[ \Dbm \zbm]_{nd}}{2 \sqrt{d}} \right)
    \end{bmatrix} \\
    & = \zbm - \frac{\Dbm^{\Tsf}}{2 \sqrt{d}} \min \left(\tau 2 \sqrt{d}, \frac{| \Dbm \zbm |}{2 \sqrt{d}}  \right) \text{sgn} \left( \frac{\Dbm \zbm}{2 \sqrt{d}}  \right) \\
    & = \zbm - \tau \Dbm^{\Tsf} \min \left( 1, \frac{| \Dbm \zbm |}{\tau 4 d}  \right) \text{sgn} \left( \Dbm \zbm \right) .
\end{align*}
In the last two lines, the operations $\min$ and $\text{sgn}$ are to be understood as being applied element-wise to $\Dbm \zbm$.
The function $\phi_{\theta}$ from 
Eq.~\eqref{eq: huberized anisotropic helper}
has the derivative,
\begin{equation*}
    \phi_{\theta}' (u) = \begin{cases}
        \frac{u}{\theta} & \text{for } |u| \leq \theta \\
        \text{sgn}(u) & \text{for } |u| > \theta 
    \end{cases} = \min \left(1, \frac{|u|}{\theta} \right) \text{sgn}( u )
\end{equation*}
Therefore, the operator can be written as 
$$\approxTV[a] = \zbm - \tau \Dbm^{\Tsf} \phi'_{\tau 4 d} \left( \Dbm \zbm \right),$$
where $\phi'_{\tau 4 d}$ is applied element-wise. 
In other words, $\approxTV[a]$ is a gradient step with step-size $\tau$ on 
\begin{equation} \label{eq: huberized anisotropic rewritten}
   \widetilde{h}^a(\zbm) = \sum_{i=1}^{nd} \phi_{\tau 4 d}( [\Dbm \zbm ]_i).
\end{equation}
A step-size of $\tau$ satisfies the criterion for the step size to be $\frac{1}{L}$, where $L$ is the Lipschitz constant of the gradients.
The Lipschitz constant for Eq.~\ref{eq: huberized anisotropic rewritten} is $1/\tau$, which can be verified as follows:
\begin{align*}
    \| \nabla \widetilde{h}(\xbm) - \nabla \widetilde{h}(\ybm) \| & = \| \Dbm^{\Tsf} \phi'_{4 \tau d}(\Dbm \xbm) - \Dbm^{\Tsf} \phi'_{4 \tau d}(\Dbm \ybm) \| \\
        & \leq \| \Dbm \| \| \phi'_{4 \tau d}(\Dbm \xbm) - \phi'_{4 \tau d}(\Dbm \ybm) \|  \\
        & \leq \left(\frac{1}{4 \tau d}\right) \| \Dbm \|  \| \Dbm \xbm - \Dbm \ybm \| \\
        & \leq \left(\frac{1}{4 \tau d}\right) \| \Dbm \|^2  \| \xbm -  \ybm \| \\
        & = \frac{1}{\tau}   \| \xbm -  \ybm \| ,
\end{align*}
where we use the fact that the finite difference matrix $\Dbm$ has spectral norm $2 \sqrt{d}$.

Now consider isotropic $\approxTV[i]$.
The soft-thresholding operator can be rewritten as
\begin{equation*} 
    \shrinkageFunction[\hspace{0.15em} \text{i}][\lambda]{\ubm}  := \max( \| \ubm \|_2 - \lambda, 0 ) \frac{\ubm}{ \| \ubm \|_2 } = \ubm - \min(\lambda, \|\ubm\|)\frac{\ubm}{\|\ubm\|}.
\end{equation*}
Let $G_i$ for $i \in \{1,...,n\}$ be the set of $d$ indices corresponding to the differences along all $d$ dimensions for $x_i$.
Since soft-thresholding affects only the difference components, it can be rewritten
\begin{align*}
    \approxTV &= \Wbm^{\Tsf} \Tcal^{\hspace{0.15em} i}_{\tau}(\Wbm \zbm)  \\
        & = \zbm - \frac{\Dbm^{\Tsf}}{2 \sqrt{d}} \begin{bmatrix}
        \min\left(\tau 2 \sqrt{d}, \frac{ \| [ \Dbm \zbm]_{G_1} \| }{2 \sqrt{d}} \right) \frac{[ \Dbm \zbm]_{G_{1}}}{  \|[ \Dbm \zbm]_{G_{1}} \|}\\
        \vdots \\
        \min \left(\tau 2 \sqrt{d}, \frac{ \| [\Dbm \zbm]_{G_{n}} \| }{2 \sqrt{d}} \right)  \frac{[ \Dbm \zbm]_{G_{n}}}{  \|[ \Dbm \zbm]_{G_{n}} \|}
    \end{bmatrix} \\
    & = \zbm - \tau \Dbm^{\Tsf} 
        \min \left(1, \frac{ \| [\Dbm \zbm]_{G_{i}} \| }{\tau 4d} \right)  \frac{[ \Dbm \zbm]_{G_{i}}}{  \|[ \Dbm \zbm]_{G_{i}} \|}.
\end{align*}
The last line is meant to be understood as applying $\min$ and  the norm to each grouping of indices.
The function $\psi_{\theta}$ from Eq.~\eqref{eq: huberized isotropic helper} has the gradient
\begin{equation*}
    \nabla \psi_{\theta} (\ubm) = \begin{cases}
        \frac{\ubm }{ \theta} & \text{for } \| \ubm \| \leq \theta \\
        \frac{\ubm}{\| \ubm \|} & \text{for } \| \ubm \| > \theta
    \end{cases} = \min\left( 1, \frac{\| \ubm \| }{\theta} \right) \frac{\ubm}{\| \ubm \|}
\end{equation*}
and so the operator can be written as 
\begin{equation*}
    \approxTV[i] = \zbm - \tau \Dbm^{\Tsf} \nabla \phi_{\tau 4 d}(\Dbm \zbm),
\end{equation*}
where $\nabla \phi_{\tau 4 d}$ is applied to each grouping of indices.
That is, $\approxTV[i]$ is a gradient step with the step-size $\tau$ on 
\begin{equation}  \label{eq: huberized isotropic rewritten}
   \widetilde{h}^i(\zbm) = \sum_{i=1}^{n} \phi_{\tau 4 d}( [\Dbm \zbm ]_{G_i}).
\end{equation}
Again, $\tau$ satisfies the step-size to be $\frac{1}{L}$, since the Lipschitz constant of the gradients for Eq.~\ref{eq: huberized isotropic rewritten} remains $1/\tau$.
\end{proof}

\section*{Proof of Proposition~\ref{prop: main prop}}
\begin{repproposition}{prop: main prop}
    For all $\zbm \in \R^{\sizeConst}$, the operator $\approxTV$ satisfies
    \begin{enumerate}[label=(\alph*)]
        \item $\approxTV = \prox_{\tau \TV}(\zbm + \deltabm)$ where $\| \deltabm \| \leq \tau \operatorBoundConst$, 
        \vspace{0.5em}
        \item $\zbm - \approxTV \in \tau \partial_{\tau \operatorSubdiffConst} \TV(\approxTV)$,
    \end{enumerate}
    where $\epsilon_1$ and $\epsilon_2$ are constants.
\end{repproposition}

\begin{proof}
    \textbf{Proof of Part (a).}
    We know $\|\Wbm \|_2 = 1$ from \eqref{eq: spectral norm W^T}.
    For both the anisotropic and isotropic versions, by its definition the soft-thresholding operator 
    $\shrinkageFunction[][\tau \hspace{0.15em} \shrinkStepConst]{\Wbm \zbm}$
    at most changes each of the $\sizeConst \dimConst$ difference elements of $\Wbm \zbm$ by $2\tau\sqrt{\dimConst}$, so we can bound 
    \begin{align} 
        \|\shrinkageFunction[][\tau \hspace{0.15em} \shrinkStepConst]{\Wbm \zbm} - \Wbm \zbm\|_2 & \leq \sqrt{\sizeConst \dimConst(2\tau\sqrt{\dimConst})^2} \nonumber \\
        & = \sqrt{4\tau^2 \sizeConst \dimConst^2} \nonumber \\
        & = 2\tau \dimConst \sqrt{\sizeConst} \;. \label{eq: soft thresh bound}
    \end{align}
    We bound $\| \approxTV - \zbm \|_2$ by  Eq.~\eqref{eq: spectral norm W^T} and Eq.~\eqref{eq: soft thresh bound} 
    \begin{align} 
         \Big\| \approxTV - \zbm \Big\|_2 & = \Big\| \approxTV - \Wbm^{\Tsf} \Wbm \zbm \Big\|_2 \nonumber \\
         & \leq \Big\|\shrinkageFunction[][\tau \hspace{0.15em} \shrinkStepConst]{\Wbm \zbm} - \Wbm \zbm \Big\|_2 \nonumber \\
         & \leq 2\tau \dimConst \sqrt{\sizeConst} \;. \label{eq: operator max distance}
    \end{align}
    For $\Phi_{\tau}(\ybm) :=  \frac{1}{2\tau}\|\ybm - \zbm \|_2^2 +\TV(\ybm)$,
    its subdifferential at $\approxTV$ is 
    \begin{align*}
        \partial \Phi_{\tau}(\approxTV) &  = \frac{1}{\tau}(\approxTV - \zbm) + \partial \TV(\approxTV) \\
        &  = \frac{1}{\tau}(\approxTV - \zbm) + \partial [\helperh \circ \Wbm](\approxTV) \\
        &  = \frac{1}{\tau}(\approxTV - \zbm) + \Wbm^{\Tsf} \partial   \helperh(\Wbm \approxTV) \;,
    \end{align*}
    where the second equality comes from $\helperh \circ \Wbm=\TV$ and the third equality comes from the chain rule.
    
    The distance $d(\bm{0}, \partial \Phi_{\tau}(\approxTV)$ is defined as the minimum $\ell_2$ distance between $\bm{0}$ and any vector in $\partial \Phi_{\tau}(\approxTV)$.
    Therefore, letting $\gbm_1 \in \partial  \helperh(\Wbm \approxTV)$, we get the following inequality
    \begin{align}
        d(\bm{0}, \partial \Phi_{\tau}(\approxTV) 
        &  \leq \Big\| \bm{0} -\left( \frac{1}{\tau} (\approxTV - \zbm) +  \Wbm^{\Tsf}\gbm_1 \right) \Big\|_2 \nonumber \\
        &  \leq \frac{1}{\tau}\Big\| \approxTV - \zbm \Big\|_2 + \Big\| \Wbm^{\Tsf}\gbm_1 \Big\|_2 \nonumber \\
        &  \leq \frac{1}{\tau}(2 \tau \dimConst \sqrt{\sizeConst}) + \| \gbm_1 \|_2 \nonumber \\
        &   \leq \frac{1}{\tau}(2 \tau \dimConst \sqrt{\sizeConst}) + 2 \dimConst \sqrt{\sizeConst} = \frac{ \tau \operatorBoundConst}{\tau} \;, \label{eq: type 3 ineq}
    \end{align}
    with $\operatorBoundConst=4 \dimConst \sqrt{\sizeConst}$.
    The third inequality comes from \eqref{eq: operator max distance} and nonexpansiveness of $\Wbm^{\Tsf}$.
    The fourth inequality comes from the fact that the norm of any gradient of $\helperh$ is bounded by $2\dimConst \sqrt{\sizeConst}$, as shown in Lemma~\ref{lemma: bounded subgradients}.
    
    Letting $\gbm_2 \in \partial \Phi_{\tau}(\approxTV)$, we can write $\| \gbm_2 \| \leq \frac{\tau \operatorBoundConst}{\tau}$.
    Equivalently, 
    \begin{gather*}
        \gbm_2 \in \frac{1}{\tau}(\approxTV - \zbm) + \partial \TV(\approxTV) \\
            \iff \\
        \zbm + \deltabm - \approxTV \in \tau \partial  \TV(\approxTV) \\
            \iff \\
            \approxTV = \prox_{\tau \TV}(\zbm + \deltabm), \text{ where } \| \deltabm \|_2 \leq \tau \operatorBoundConst \;, 
    \end{gather*}
    and  $\deltabm = \tau \gbm_2$.
    The last equivalence comes from the property of proximal operators that $\xbm = \prox_{\tau \TV}(\ybm)  \iff \ybm - \xbm \in \tau \partial \hspace{0.15em} \TV( \xbm)$.
    This establishes the desired result of part (b).
    
    \textbf{Proof of Part (b).}
    From the definition of the subgradient and the proximal operator, we have
    \begin{align*}
         \ubm = \prox_{\tau \helperh}(\Wbm \zbm) \quad \iff \quad \Wbm \zbm - \ubm \in \tau \partial \helperh(\ubm) \;. 
    \end{align*}
    By definition of subgradient, $\forall \wbm \in \R^{2 \sizeConst d}$
    \begin{align}
        \helperh(\wbm) & \geq \helperh(\ubm) + \left(\frac{\Wbm \zbm - \ubm}{\tau}\right)^{\Tsf}(\wbm - \ubm) \;. \nonumber
    \end{align}
    By replacing $\wbm$ with $\Wbm \ybm$, where $\ybm \in \R^{\sizeConst}$, we obtain
    \begin{align}
         \helperh(\Wbm \ybm) & \geq \helperh(\ubm) + \frac{1}{\tau} (\Wbm \zbm - \ubm)^{\Tsf}(\Wbm \ybm - \ubm) \nonumber   \\
         & = \helperh(\ubm) + \frac{1}{\tau} \Big( (\Wbm\zbm)^{\Tsf}(\Wbm \ybm)  - (\Wbm(\zbm + \ybm))^{\Tsf} \ubm + \|\ubm\|_2^2 \Big) \nonumber \\
         & \geq \helperh(\ubm) + \frac{1}{\tau} \Big( \zbm^{\Tsf}\ybm  - (\zbm + \ybm)^{\Tsf} (\Wbm^{\Tsf}\ubm) + \| \Wbm^{\Tsf} \ubm\|_2^2 \Big) \nonumber \\
         & = \helperh(\ubm) + \frac{1}{\tau} (\zbm - \Wbm^{\Tsf}\ubm)^{\Tsf} (\ybm - \Wbm^{\Tsf}\ubm )  \nonumber   \\
         & = \helperh(\shrinkageFunction[][\tau \hspace{0.15em} \shrinkStepConst]{\Wbm \zbm})  + \frac{1}{\tau} (\zbm - \approxTV)^{\Tsf} (\ybm - \approxTV ) \;, \label{eq: type 2 bound from def of subgrad}
    \end{align}
    where the last inequality comes from the definition of adjoint $(\Wbm \xbm_1)^{\Tsf}\xbm_2 = \xbm_1^{\Tsf} ( \Wbm^{\Tsf}\xbm_2 )$, the fact that $\Wbm^{\Tsf}\Wbm = \Ibm$, and that $\|\Wbm^T\|_2 = 1$. 
    In the last line, we used the fact that $ \ubm = \prox_{\tau \helperh}(\Wbm \zbm) = \shrinkageFunction[][\tau \hspace{0.15em} \shrinkStepConst]{\Wbm \zbm}$ from Eq.~\ref{eq: prox of hhat} and $\approxTV = \Wbm^{\Tsf} \ubm$  . 

\begin{figure}[t]
 \centering
 \includegraphics[scale=1]{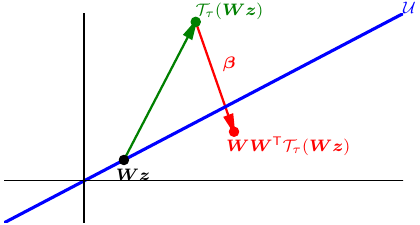}
 \caption{
 A visual illustration of the characterization of the projection as $\Wbm\Wbm^{\Tsf} \shrinkageFunction[][\tau \hspace{0.15em} \shrinkStepConst]{\Wbm \zbm} = \shrinkageFunction[][\tau \hspace{0.15em} \shrinkStepConst]{\Wbm \zbm} + \deltabm$.
 For any $\zbm \in \R^n$, the vector $\Wbm \zbm \in \Ucal := \{\ubm | \ \ubm = \Wbm \Wbm^{\Tsf} \ubm \}$.
 Applying the soft-thresholding function gives $\Tcal_{\tau \hspace{0.15em} \shrinkStepConst}(\Wbm \zbm)$, not necessarily in $\Ucal$. 
 Applying $\Wbm \Wbm^{\Tsf}$ projects back to $\Ucal$, which can be represented as $\shrinkageFunction[][\tau \hspace{0.15em} \shrinkStepConst]{\Wbm \zbm} + \deltabm$.
 }
\label{fig: projection visualization}
\end{figure}
    The projection can be characterized as
    \begin{equation}
        \Wbm\Wbm^{\Tsf} \shrinkageFunction[][\tau \hspace{0.15em} \shrinkStepConst]{\Wbm \zbm} = \shrinkageFunction[][\tau \hspace{0.15em} \shrinkStepConst]{\Wbm \zbm} + \betabm \;.
    \end{equation}
    That is, $\betabm$ is the orthogonal component of $\shrinkageFunction[][\tau \hspace{0.15em} \shrinkStepConst]{\Wbm \zbm}$ to the subspace $\Ucal := \{\ubm | \ \ubm = \Wbm \Wbm^{\Tsf} \ubm \}$.
    Figure~\ref{fig: projection visualization} provides a visual representation.
    
    Since the projection $\Wbm \Wbm^{\Tsf}$ minimizes the $\ell_2$ distance between $\shrinkageFunction[][\tau \hspace{0.15em} \shrinkStepConst]{\Wbm \zbm}$ and $\Ucal$, the following inequality holds
    \begin{align*}
        \| \betabm \|_2  
        & = \min_{\wbm \in \Ucal} \| \wbm - \shrinkageFunction[][\tau \hspace{0.15em} \shrinkStepConst]{\Wbm \zbm} \|_2 \\
        & \leq  \| \wbm - \shrinkageFunction[][\tau \hspace{0.15em} \shrinkStepConst]{\Wbm \zbm} \|_2 \qquad \forall \wbm \in \Ucal \;.
    \end{align*}
    Since $\Wbm \zbm \in \Ucal$, this yields
    \begin{equation*}
        \|\betabm\|_2 \leq \| \shrinkageFunction[][\tau \hspace{0.15em} \shrinkStepConst]{\Wbm \zbm} - \Wbm \zbm\|_2 \leq 2\tau \dimConst \sqrt{\sizeConst} \;,
    \end{equation*}
    where the last inequality comes from \eqref{eq: soft thresh bound}.
    Letting $\betabm^{\text{dif}}$ be the components of $\deltabm$ corresponding to difference coefficients,
    \begin{align*}
        \| \betabm^{\text{dif}} \|_{2,1} \leq \| \betabm^{\text{dif}} \|_{1,1} & = \| \betabm^{\text{dif}} \|_1 \\ 
            & \leq \sqrt{\sizeConst \dimConst} \| \betabm^{\text{dif}} \|_2 \\
            & \leq \sqrt{\sizeConst \dimConst} \| \betabm \|_2 \\ 
            & \leq (\sqrt{\sizeConst \dimConst})(2\tau \dimConst \sqrt{\sizeConst}) \\
            & = 2 \tau \sizeConst \dimConst^{3/2} \;,
    \end{align*}
    where the first inequality comes from the fact that the $\ell_2$ norm is smaller than the $\ell_1$ norm for any given vector.
    The second inequality comes from the fact that the $\ell_1$ norm is at most the square root of the dimension of the vector times its $\ell_2$ norm.
    
    Using the definition of $\helperh$ in Eq.~\eqref{eq: helper h definition} relationship $\helperh(\Wbm \zbm) = \TV(\zbm)$, shown in Eq.~\eqref{eq: helperh related to tv}, we get
    \begin{align}
        \TV(\approxTV) & = \helperh(\Wbm \approxTV) \nonumber \\
        & = 2\sqrt{\dimConst} \| [\Wbm \approxTV ]^{\text{dif}} \|_{p,1} \nonumber \\
        & = 2\sqrt{\dimConst} \| [\Wbm \Wbm^{\Tsf} \shrinkageFunction[][\tau \hspace{0.15em} \shrinkStepConst]{\Wbm \zbm}]^{\text{dif}} \|_{p,1} \nonumber \\ 
        & = 2\sqrt{\dimConst} \| [\shrinkageFunction[][\tau \hspace{0.15em} \shrinkStepConst]{\Wbm \zbm} + \betabm]^{\text{dif}} \|_{p,1} \nonumber \\ 
        & \leq 2\sqrt{\dimConst}\| [\shrinkageFunction[][\tau \hspace{0.15em} \shrinkStepConst]{\Wbm \zbm}]^{\text{dif}} \|_{p,1} \nonumber \\
            & \qquad \qquad \qquad + 2\sqrt{\dimConst}\| \betabm^{\text{dif}} \|_{p,1} \nonumber \\
        & \leq 2\sqrt{\dimConst}\| [\shrinkageFunction[][\tau \hspace{0.15em} \shrinkStepConst]{\Wbm \zbm}]^{\text{dif}} \|_{p,1} \nonumber \\
            & \qquad \qquad \qquad + (2\sqrt{\dimConst})(2 \tau \sizeConst \dimConst^{3/2}) \nonumber \\
        & = \helperh(\shrinkageFunction[][\tau \hspace{0.15em} \shrinkStepConst]{\Wbm \zbm}) + 4\tau \sizeConst \dimConst^2 \;. \nonumber %
    \end{align}
    Rearranging the above inequality yields
    \begin{align}
        \helperh(\shrinkageFunction[][\tau \hspace{0.15em} \shrinkStepConst]{\Wbm \zbm}) \geq \TV(\approxTV) - 4\tau \sizeConst \dimConst^2 \;. \label{eq: type 2 bound from triangle}
    \end{align}
    From Eq.~\eqref{eq: type 2 bound from def of subgrad} we have
    \begin{align*}
        \TV(\ybm) \geq \helperh(\shrinkageFunction[][\tau \hspace{0.15em} \shrinkStepConst]{\Wbm \zbm}) + \frac{1}{\tau} (\zbm - \approxTV)^{\Tsf} (\ybm - \approxTV ) \;.
    \end{align*}
    Combining this with Eq.~\eqref{eq: type 2 bound from triangle} provides the following for all $\ybm \in \R^{\sizeConst}$
    \begin{align*}
        \TV(\ybm)   \geq  \TV(\approxTV)  + \frac{1}{\tau} (\zbm-\approxTV)^{\Tsf} (\ybm - \approxTV) - \tau 4 \sizeConst \dimConst^2 \;.
    \end{align*}
    By Definition~\ref{def: epsilon subdifferential} this establishes
    \begin{align*}
        \zbm - \approxTV \in \tau \partial_{\tau \operatorSubdiffConst } \TV( \approxTV )\;,
    \end{align*}
    where $\operatorSubdiffConst = 4 \sizeConst \dimConst^2$.
    This establishes the desired result for Part (c).
\end{proof}

\begin{lemma} \label{lemma: bounded subgradients}
    For both the anisotropic and isotropic versions, any subgradient $\gbm \in \partial \helperh(\ubm)$ has a bounded $\ell_2$ norm $\| \gbm \|_2 \leq 2 \dimConst \sqrt{\sizeConst}$.
\end{lemma}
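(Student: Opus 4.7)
\medskip

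\noindent\textbf{Proof plan for Lemma~\ref{lemma: bounded subgradients}.}

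The plan is to exploit the decomposition of $\ubm \in \R^{2nd}$ into its $nd$ average coefficients and its $nd$ difference coefficients $\ubm^{\text{dif}}$. Since $\helperh$ depends only on $\ubm^{\text{dif}}$ (see Eq.~\eqref{eq: helper h definition}), every subgradient $\gbm \in \partial \helperh(\ubm)$ has its ``average'' block equal to $\zerobm$. Hence $\|\gbm\|_2 = \|\gbm^{\text{dif}}\|_2$, and it suffices to bound the $\ell_2$ norm of the subgradient with respect to the difference coefficients for each of the two cases.

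For the anisotropic case, $\helperh[\text{a}](\ubm) = 2\sqrt{\dimConst}\,\|\ubm^{\text{dif}}\|_1$ is simply a positive multiple of the $\ell_1$ norm on $\R^{nd}$. Using the standard fact that the subdifferential of $\|\cdot\|_1$ is contained in the $\ell_\infty$ unit ball, every component of $\gbm^{\text{dif}}$ lies in $[-2\sqrt{\dimConst},\,2\sqrt{\dimConst}]$. Summing over the $\sizeConst\dimConst$ difference coordinates gives
\begin{align*}
\|\gbm\|_2 \;\le\; \sqrt{\sizeConst\dimConst \cdot (2\sqrt{\dimConst})^2} \;=\; 2\dimConst\sqrt{\sizeConst}.
\end{align*}

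For the isotropic case, $\helperh[\text{i}](\ubm) = 2\sqrt{\dimConst}\sum_{i=1}^{\sizeConst} \|[\ubm^{\text{dif}}]_i\|_2$ decomposes as a separable sum over the $\sizeConst$ groups $[\ubm^{\text{dif}}]_i \in \R^{\dimConst}$. The subdifferential therefore decomposes blockwise, and the standard subdifferential of the Euclidean norm satisfies $\partial \|\cdot\|_2 \subseteq \{\vbm \in \R^{\dimConst} : \|\vbm\|_2 \le 1\}$ (equal to the singleton $\{\cdot/\|\cdot\|_2\}$ at nonzero points and to the unit ball at the origin). Hence the $i$-th block of $\gbm^{\text{dif}}$ has $\ell_2$ norm at most $2\sqrt{\dimConst}$, and summing over $\sizeConst$ blocks yields $\|\gbm\|_2 \le \sqrt{\sizeConst \cdot 4\dimConst} = 2\sqrt{\sizeConst\dimConst} \le 2\dimConst\sqrt{\sizeConst}$ (since $\dimConst \ge 1$).

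Since both cases produce the common bound $2\dimConst\sqrt{\sizeConst}$, the lemma follows. I do not anticipate any obstacle here beyond bookkeeping: the only subtlety is remembering that $\helperh$ ignores the $\sizeConst\dimConst$ average coordinates so their subgradient components vanish, and aggregating over the correct number of difference coordinates ($\sizeConst\dimConst$ scalars for anisotropic, $\sizeConst$ groups of size $\dimConst$ for isotropic) to achieve a unified bound.
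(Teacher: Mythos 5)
Your proof is correct and follows essentially the same route as the paper's: bound the subgradient blockwise over the $\sizeConst$ difference groups (via the $\ell_\infty$ unit ball for the anisotropic case and the Euclidean unit ball for the isotropic case) and aggregate to get $2\dimConst\sqrt{\sizeConst}$ and $2\sqrt{\sizeConst\dimConst}$ respectively. Your explicit observation that the average-coefficient block of any subgradient vanishes is a small but welcome clarification that the paper leaves implicit.
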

\begin{proof}
    This lemma makes use of the following subdifferential sum property: 
    for any functions $f_1, \dots, f_m \in \Gamma^0$ such that the sets $\text{ri}(\text{dom} f_i)$ have a point in common, where $\text{ri}$ is the relative interior, then 
    \begin{align} \label{eq: subdifferential sum}
        \partial \Bigg[\sum_{i=1}^m f_i \Bigg](\xbm) = \sum_{i=1}^m \partial f_i(\xbm) \qquad \forall \xbm \in \R^{\sizeConst} \;.
    \end{align}
    See Theorem 23.8 in~\cite{Rockafellar.1970} and Corollary 16.48 in~\cite{BauschkeCombettes2017}.
    First consider the anisotropic version
    \begin{equation*}
        \helperh[\text{a}](\ubm) = 2\sqrt{\dimConst}\| \ubm^{\text{dif}} \|_{1,1} = 2\sqrt{\dimConst} \sum_{i=1}^{\sizeConst} \| [\ubm^{\text{dif}}]_{i} \|_1 \;.
    \end{equation*}
    Since the domain of each $|| \cdot||_1$ is all of $\R^{2nd}$, the relative interiors of the domains have a nonempty intersection, and so we can apply Eq.~\eqref{eq: subdifferential sum}.
    Consider $\partial \big( \| [\ubm^{\text{dif}}]_{i} \|_1 \big)$, whose norm is bounded by $\sqrt{d}$.
    By Eq.~\eqref{eq: subdifferential sum}, 
    letting $\gbm \in \partial \helperh[\text{a}](\ubm)$, we can bound
    \begin{align*}
        \| \gbm \|_2 & = 2\sqrt{\dimConst} \sqrt{ \sum_{i=1}^{\sizeConst}  \Big( \partial \big( \| [\ubm^{\text{dif}}]_{i} \|_1 \big)\Big)^2 } \\
        & \leq (2\sqrt{\dimConst}) \sqrt{\sizeConst (\sqrt{\dimConst})^2} \\
        &= 2\dimConst \sqrt{\sizeConst} \;.
    \end{align*}
    Now consider the isotropic version
    \begin{equation*}
         \helperh[\text{i}](\ubm) = 2\sqrt{\dimConst}\| \ubm^{\text{dif}} \|_{2,1} = 2\sqrt{\dimConst} \sum_{i=1}^{\sizeConst} \| [\ubm^{\text{dif}}]_{i} \|_2 \;.
    \end{equation*}
    Similarly, the domain of each $|| \cdot||_2$ is all of $\R^{2nd}$, and so we can apply Eq.~\eqref{eq: subdifferential sum}.
    Consider $\partial \big( \|[\ubm^{\text{dif}}]_n \|_2 \big)$, whose norm is bounded by $1$.
    By Eq.~\eqref{eq: subdifferential sum}, letting $\gbm \in \partial \helperh[\text{i}](\ubm)$, we obtain
    \begin{align*}
        \| \gbm \|_2 = 2 \sqrt{\dimConst}  \sqrt{ \sum_{i=1}^n \Big\| \partial \Big( \|[\ubm^{\text{dif}}]_{i} \|_2 \Big) \Big\|_2^2 } \leq 2\sqrt{ \sizeConst \dimConst } \;.
    \end{align*}
    Since $\dimConst \geq 1$, we have $2\sqrt{ \sizeConst \dimConst } \leq 2 \dimConst \sqrt{ \sizeConst }$.
    Therefore, for both the anisotropic and isotropic, $\|\gbm\|_2$ is bounded by $2\dimConst \sqrt{\sizeConst}$. 
\end{proof}

\bibliographystyle{IEEEbib}
\bibliography{references}

\begin{thebibliography}{10}

\bibitem{Rudin.etal1992}
L.~I. Rudin, S.~Osher, and E.~Fatemi,
\newblock ``Nonlinear total variation based noise removal algorithms,''
\newblock {\em Physica D}, vol. 60, no. 1--4, pp. 259--268, Nov. 1992.

\bibitem{Parikh2014}
Neal Parikh and Stephen Boyd,
\newblock ``Proximal algorithms,''
\newblock {\em Foundations and Trends in Optimization}, vol. 1, no. 3, pp. 127–--239, 2014.

\bibitem{Beck.Teboulle2009}
A.~Beck and M.~Teboulle,
\newblock ``A fast iterative shrinkage-thresholding algorithm for linear inverse problems,''
\newblock {\em SIAM J. Imag. Sciences}, vol. 2, no. 1, pp. 183--202, 2009.

\bibitem{Boyd.etal2011}
S.~Boyd, N.~Parikh, E.~Chu, B.~Peleato, and J.~Eckstein,
\newblock ``Distributed optimization and statistical learning via the alternating direction method of multipliers,''
\newblock {\em Found. and Trends in Mach. Learn.}, vol. 3, no. 1, pp. 1--122, Jul. 2011.

\bibitem{Beck.Teboulle2009a}
A.~Beck and M.~Teboulle,
\newblock ``Fast gradient-based algorithm for constrained total variation image denoising and deblurring problems,''
\newblock {\em IEEE Trans. Image Process.}, vol. 18, no. 11, pp. 2419--2434, Nov. 2009.

\bibitem{goldstein-2009-split}
Tom Goldstein and Stanley Osher,
\newblock ``The split {B}regman method for l1-regularized problems,''
\newblock {\em SIAM Journal on Imaging Sciences}, vol. 2, no. 2, pp. 323--343, 2009.

\bibitem{Kamilov.etal2014a}
U.~S. Kamilov, E.~Bostan, and M.~Unser,
\newblock ``Variational justification of cycle spinning for wavelet-based solutions of inverse problems,''
\newblock {\em IEEE Signal Process. Lett.}, vol. 21, no. 11, pp. 1326--1330, Nov. 2014.

\bibitem{Kamilov2016}
U.~S. Kamilov,
\newblock ``Parallel proximal methods for total variation minimization,''
\newblock in {\em {IEEE} Int. Conf. Acoust., Speech Signal Process.}, Shanghai, China, Mar. 19-25, 2016, pp. 4697--4701.

\bibitem{Kamilov2017}
U.~S. Kamilov,
\newblock ``A parallel proximal algorithm for anisotropic total variation minimization,''
\newblock {\em IEEE Trans. Image Process.}, vol. 26, no. 2, pp. 539--548, Feb. 2017.

\bibitem{Kamilov2016a}
U.~S. Kamilov,
\newblock ``Minimizing isotropic total variation without subiterations,''
\newblock in {\em Proc. 3rd Int. Travel. Workshop Interac. between Sparse models and Tech.}, Aalborg, Denmark, Aug. 24-26, 2016.

\bibitem{Yu2013}
Y.-L. Yu,
\newblock ``Better approximation and faster algorithm using the proximal average,''
\newblock in {\em Proc. Advances in Neural Information Processing Systems 26}, Lake Tahoe, CA, USA, December 5-10, 2013, pp. 458--466.

\bibitem{Bronstein.etal2002}
M.~M. Bronstein, A.~M. Bronstein, M.~Zibulevsky, and H.~Azhari,
\newblock ``Reconstruction in diffraction ultrasound tomography using nonuniform {FFT},''
\newblock {\em IEEE Trans. Med. Imag.}, vol. 21, no. 11, pp. 1395--1401, Nov. 2002.

\bibitem{Afonso.etal2010}
M.~V. Afonso, J.~M.Bioucas-Dias, and M.~A.~T. Figueiredo,
\newblock ``Fast image recovery using variable splitting and constrained optimization,''
\newblock {\em IEEE Trans. Image Process.}, vol. 19, no. 9, pp. 2345--2356, Sep. 2010.

\bibitem{Candes.etal2006}
E.~J. Cand{\`e}s, J.~Romberg, and T.~Tao,
\newblock ``Robust uncertainty principles: Exact signal reconstruction from highly incomplete frequency information,''
\newblock {\em IEEE Trans. Inf. Theory}, vol. 52, no. 2, pp. 489--509, Feb. 2006.

\bibitem{Lustig.etal2007}
M.~Lustig, D.~L. Donoho, and J.~M. Pauly,
\newblock ``Sparse {MRI}: The application of compressed sensing for rapid {MR} imaging,''
\newblock {\em Magn. Reson. Med.}, vol. 58, no. 6, pp. 1182--1195, Dec. 2007.

\bibitem{Louchet.Moisan2008}
C.~Louchet and L.~Moisan,
\newblock ``Total variation denoising using posterior expectation,''
\newblock in {\em Eur. Signal Process. Conf}, Lausanne, Switzerland, Aug. 25-29, 2008.

\bibitem{Oliveira.etal2009}
J.~P. Oliveira, J.~M. Bioucas-Dias, and M.~A.~T. Figueiredo,
\newblock ``Adaptive total variation image deblurring: A majorization-minimization approach,''
\newblock {\em Signal Process.}, vol. 89, no. 9, pp. 1683--1693, Sep. 2009.

\bibitem{Kamilov.etal2015b}
U.~S. Kamilov, I.~N. Papadopoulos, M.~H. Shoreh, A.~Goy, C.~Vonesch, M.~Unser, and D.~Psaltis,
\newblock ``A learning approach to optical tomography,''
\newblock in {\em Front. in Opt. 2015}. 2015, p. LW3I.1, Optical Society of America.

\bibitem{Qu.etal2020}
Zhaoyan Qu, Ximing Yan, Jinxiao Pan, and Ping Chen,
\newblock ``Sparse view {CT} image reconstruction based on total variation and wavelet frame regularization,''
\newblock {\em {IEEE} Access}, vol. 8, pp. 57400--57413, 2020.

\bibitem{J.Liuetal2018}
J.~{Liu}, Y.~{Sun}, X.~{Xu}, and U.~S. {Kamilov},
\newblock ``Image restoration using total variation regularized deep image prior,''
\newblock in {\em {IEEE} Int. Conf. Acoust., Speech Signal Process.}, May 2019, pp. 7715--7719.

\bibitem{Guo.Chen2021}
Juncheng Guo and Qinghua Chen,
\newblock ``Image denoising based on nonconvex anisotropic total-variation regularization,''
\newblock {\em Signal Process.}, vol. 186, pp. 108124, 2021.

\bibitem{Kong.etal2022}
Xiangyang Kong, Yongqiang Zhao, Jonathan Cheung-Wai Chan, and Jize Xue,
\newblock ``Hyperspectral image restoration via spatial-spectral residual total variation regularized low-rank tensor decomposition,''
\newblock {\em Remote Sens.}, vol. 14, no. 3, pp. 511, Jan. 2022.

\bibitem{Chen.etal2020}
Yong Chen, Wei He, Naoto Yokoya, and Ting-Zhu Huang,
\newblock ``Hyperspectral image restoration using weighted group sparsity-regularized low-rank tensor decomposition,''
\newblock {\em {IEEE} Trans. Cybernet.}, vol. 50, no. 8, pp. 3556--3570, Aug. 2020.

\bibitem{Takeyama.etal2017}
Saori Takeyama, Shunsuke Ono, and Itsuo Kumazawa,
\newblock ``Hyperspectral image restoration by hybrid spatio-spectral total variation,''
\newblock in {\em {IEEE} Int. Conf. Acoust., Speech Signal Process.}, Mar. 2017, pp. 4586--4590.

\bibitem{Ma.etal2022}
Ge~Ma, Ziwei Yan, Zhifu Li, and Zhijia Zhao,
\newblock ``Efficient iterative regularization method for total variation-based image restoration,''
\newblock {\em Electronics}, vol. 11, no. 2, pp. 258, Jan. 2022.

\bibitem{Diwakar.etal2024}
Manoj Diwakar, Prabhishek Singh, and Deepak Garg,
\newblock ``Edge-guided filtering based {CT} image denoising using fractional order total variation,''
\newblock {\em Biomed. Signal Process. and Control}, vol. 92, pp. 106072, Jun. 1, 2024.

\bibitem{Chambolle.Pock2016}
Antonin Chambolle and Thomas Pock,
\newblock ``An introduction to continuous optimization for imaging,''
\newblock {\em Acta Numerica}, vol. 25, pp. 161--319, May 2016.

\bibitem{Luo.etal2018}
Xiaoqiang Luo, Wei Yu, and Chengxiang Wang,
\newblock ``An image reconstruction method based on total variation and wavelet tight frame for limited-angle {CT},''
\newblock {\em {IEEE} Access}, vol. 6, pp. 1461--1470, 2018.

\bibitem{kadu.etal2020}
Ajinkya Kadu, Hassan Mansour, and Petros~T. Boufounos,
\newblock ``High-{Contrast} {Reflection} {Tomography} {With} {Total}-{Variation} {Constraints},''
\newblock {\em IEEE Trans. Comput. Imag.}, vol. 6, pp. 1523--1536, 2020.

\bibitem{Chambolle.Pock2011}
Antonin Chambolle and Thomas Pock,
\newblock ``A {First}-{Order} {Primal}-{Dual} {Algorithm} for {Convex} {Problems} with {Applications} to {Imaging},''
\newblock {\em Journal of Math. Imag. and Vision}, vol. 40, no. 1, pp. 120--145, May 2011.

\bibitem{Condat2013}
L.~Condat,
\newblock ``A direct algorithm for {1-D} total variation denoising,''
\newblock {\em IEEE Signal Process. Lett.}, vol. 20, no. 11, pp. 1054--1057, Nov. 2013.

\bibitem{Mallat2009}
S.~Mallat,
\newblock {\em A Wavelet Tool of Signal Processing: The Sparse Way},
\newblock Academic Press, San Diego, 3rd edition, 2009.

\bibitem{Coifman.Donoho1995}
R.~R. Coifman and D.~L. Donoho,
\newblock {\em Springer Lecture Notes in Statistics}, chapter Translation-invariant de-noising, pp. 125--150,
\newblock Springer-Verlag, 1995.

\bibitem{Fletcher.etal2002}
A.~K. Fletcher, K.~Ramchandran, and V.~K. Goyal,
\newblock ``Wavelet denoising by recursive cycle spinning,''
\newblock in {\em Proc. {IEEE} Int. Conf. Image Process. ({ICIP}'02)}, Rochester, {NY}, {USA}, Sep. 2002, pp. II.873--II.876.

\bibitem{Figueiredo.Nowak2003}
M.~A.~T. Figueiredo and R.~D. Nowak,
\newblock ``An {EM} algorithm for wavelet-based image restoration,''
\newblock {\em IEEE Trans. Image Process.}, vol. 12, no. 8, pp. 906--916, Aug. 2003.

\bibitem{Vonesch.Unser2008}
C.~Vonesch and M.~Unser,
\newblock ``A fast thresholded {L}andweber algorithm for wavelet-regularized multidimensional deconvolution,''
\newblock {\em IEEE Trans. Image Process.}, vol. 17, no. 4, pp. 539--549, Apr. 2008.

\bibitem{Vonesch.Unser2009}
C.~Vonesch and M.~Unser,
\newblock ``A fast multilevel algorithm for wavelet-regularized image restoration,''
\newblock {\em IEEE Trans. Image Process.}, vol. 18, no. 3, pp. 509--523, Mar. 2009.

\bibitem{Guerquin-Kern.etal2011}
M.~Guerquin-Kern, M.~H\"{a}berlin, K.~P. Pr\"{u}ssmann, and M.~Unser,
\newblock ``A fast wavelet-based reconstruction method for magnetic resonance imaging,''
\newblock {\em IEEE Trans. Med. Imag.}, vol. 30, no. 9, pp. 1649--1660, Sep. 2011.

\bibitem{Ramani.Fessler2012}
S.~Ramani and J.~A. Fessler,
\newblock ``A hybrid regularizer combining orthonormal wavelets and finite differences for statistical reconstruction in {3-D} {CT},''
\newblock in {\em Proc. 2nd Intl. Mtg. Image Form. in X-ray CT}, Salt Lake City, UT, USA, 2012, pp. 348--351.

\bibitem{Kamilov.etal2012}
U.~S. Kamilov, E.~Bostan, and M.~Unser,
\newblock ``Wavelet shrinkage with consistent cycle spinning generalizes total variation denoising,''
\newblock {\em IEEE Signal Process. Lett.}, vol. 19, no. 4, pp. 187--190, Apr. 2012.

\bibitem{Borisch.etal2023}
Eric~A. Borisch, Adam~T. Froemming, Roger~C. Grimm, Akira Kawashima, Joshua~D. Trzasko, and Stephen~J. Riederer,
\newblock ``Model-based image reconstruction with wavelet sparsity regularization for through-plane resolution restoration in t2-weighted spin-echo prostate {MRI},''
\newblock {\em Magn. Reson. Med.}, vol. 89, no. 1, pp. 454--468, Jan. 2023.

\bibitem{Ong.etal2018}
Frank Ong, Joseph~Y. Cheng, and Michael Lustig,
\newblock ``General phase regularized reconstruction using phase cycling,''
\newblock {\em Magn. Reson. Med.}, vol. 80, no. 1, pp. 112--125, 2018.

\bibitem{Teyfouri.etal2021}
N.~Teyfouri, Hossein Rabbani, and I.~Jabbari,
\newblock ``Low-dose cone-beam computed tomography reconstruction through a fast three-dimensional compressed sensing method based on the three-dimensional pseudo-polar {F}ourier transform,''
\newblock {\em J. Med. Signals and Sensors}, vol. 12, no. 1, pp. 8--24, Dec. 28, 2021.

\bibitem{Rockafellar1976}
R.~T. Rockafellar,
\newblock ``Monotone operators and the proximal point algorithm,''
\newblock {\em {SIAM} J. Control and Optim. Sciences}, vol. 14, no. 5, pp. 877--898, 1976.

\bibitem{Guler1992}
Osman G{\"u}ler,
\newblock ``New proximal point algorithms for convex minimization,''
\newblock {\em {SIAM} J. Optim}, vol. 2, no. 4, pp. 649--664, Nov. 1992.

\bibitem{Aujol.Dossal2015}
J.-F. Aujol and Ch. Dossal,
\newblock ``Stability of over-relaxations for the forward-backward algorithm, application to {FISTA},''
\newblock {\em {SIAM} J. Optim}, vol. 25, no. 4, pp. 2408--2433, Jan. 2015.

\bibitem{dAspremont2008}
A.~{d'Aspremont},
\newblock ``Smooth optimization with approximate gradient,''
\newblock {\em SIAM J. Optim.}, vol. 19, no. 3, pp. 1171--1183, 2008.

\bibitem{Devolder.etal2013}
O.~Devolder, F.~Glineur, and Y.~Nesterov,
\newblock ``First-order methods of smooth convex optimization with inexact oracle,''
\newblock {\em Math. Program. Ser. A}, vol. 146, no. 1-2, pp. 37--75, 2013.

\bibitem{Salzo.Villa2012}
Saverio Salzo and Silvia Villa,
\newblock ``Inexact and accelerated proximal point algorithms,''
\newblock {\em J. Convex Analysis}, vol. 19, no. 4, pp. 1167--1192, 2012.

\bibitem{Villa2013}
Silvia Villa, Saverio Salzo, Luca Baldassarre, and Alessandro Verri,
\newblock ``Accelerated and inexact forward-backward algorithms,''
\newblock {\em {SIAM} J. Optim.}, vol. 23, no. 3, pp. 1607--1633, Jan. 2013.

\bibitem{Schmidt.etal2011}
M.~Schmidt, N.~{Le Roux}, and F.~Bach,
\newblock ``Convergence rates of inexact proximal-gradient methods for convex optimization,''
\newblock in {\em Proc. Adv. in Neural Inf. Proc. Syst.}, Granada, Spain, Dec. 12-15, 2011, vol.~24.

\bibitem{Nesterov2023}
Yurii Nesterov,
\newblock ``Inexact accelerated high-order proximal-point methods,''
\newblock {\em Math. Program Ser. B}, vol. 197, no. 1, pp. 1--26, Jan. 1, 2023.

\bibitem{Bello-Cruz2020}
Yunier Bello-Cruz, Max L.~N. Gon{\c c}alves, and Nathan Krislock,
\newblock ``On inexact accelerated proximal gradient methods with relative error rules,''
\newblock {\em arXiv:2005.03766}, May 7, 2020.

\bibitem{Bertsekas2011}
D.~P. Bertsekas,
\newblock ``Incremental proximal methods for large scale convex optimization,''
\newblock {\em Math. Program. Ser. B}, vol. 129, pp. 163--195, 2011.

\bibitem{Elad.etal2007}
M.~Elad, P.~Milanfar, and R.~Rubinstein,
\newblock ``Analysis versus synthesis in signal priors,''
\newblock {\em Inverse Problems}, vol. 23, no. 3, pp. 947--968, 2007.

\bibitem{selesnick2017}
Ivan Selesnick,
\newblock ``Total variation denoising via the moreau envelope,''
\newblock {\em IEEE Signal Process. Lett.}, vol. 24, no. 2, pp. 216--220, 2017.

\bibitem{Hong.etal2020}
Byung-Woo Hong, Jakeoung Koo, Martin Burger, and Stefano Soatto,
\newblock ``Adaptive regularization of some inverse problems in image analysis,''
\newblock {\em IEEE Trans. Image Process.}, vol. 29, pp. 2507--2521, 2020.

\bibitem{xu.etal2018}
Jingyan Xu, Frédéric Noo, and Benjamin M.~W. Tsui,
\newblock ``A direct algorithm for optimization problems with the huber penalty,''
\newblock {\em IEEE Trans. Med. Imag.}, vol. 37, no. 1, pp. 162--172, 2018.

\bibitem{Pock.etal2010}
Thomas Pock, Daniel Cremers, Horst Bischof, and Antonin Chambolle,
\newblock ``Global solutions of variational models with convex regularization,''
\newblock {\em SIAM J. Imag. Sciences}, vol. 3, no. 4, pp. 1122--1145, 2010.

\bibitem{Pagliari.etal2022}
Valerio Pagliari, Kostas Papafitsoros, Bogdan Raibt\u{a}, and Andreas Vikelis,
\newblock ``Bilevel training schemes in imaging for total variation--type functionals with convex integrands,''
\newblock {\em SIAM J. Imag. Sciences}, vol. 15, no. 4, pp. 1690--1728, 2022.

\bibitem{huber1964}
Peter~J. Huber,
\newblock ``Robust {Estimation} of a {Location} {Parameter},''
\newblock {\em The Annals of Math. Stat.}, vol. 35, no. 1, pp. 73--101, Mar. 1964.

\bibitem{Moreau.1965}
J.~J. Moreau,
\newblock ``Proximité et dualité dans un espace hilbertien,''
\newblock {\em Bulletin de la Société Mathématique de France}, vol. 93, pp. 273--299, 1965.

\bibitem{yosida1963}
Kosaku Yosida,
\newblock {\em Functional Analysis}, vol. 123 of {\em Grundlehren der Mathematischen Wissenschaften},
\newblock Springer-Verlag, Berlin, 1963,
\newblock First edition.

\bibitem{Kowalski2024}
Matthieu Kowalski, Benoît Malézieux, Thomas Moreau, and Audrey Repetti,
\newblock ``Analysis and synthesis denoisers for forward-backward plug-and-play algorithms,''
\newblock {\em HAL Archives}, 2024,
\newblock ffhal-04786802v2f.

\bibitem{boyd.vandenberghe.2004}
Stephen Boyd and Lieven Vandenberghe,
\newblock {\em Convex Optimization},
\newblock Cambridge University Press, Cambridge, UK, 2004.

\bibitem{Rockafellar.1970}
Ralph~Tyrell Rockafellar,
\newblock {\em Convex Analysis},
\newblock Princeton University Press, Princeton, 1970.

\bibitem{Brondsted.Rockafellar1965}
A.~Br{\o}ndsted and R.~T. Rockafellar,
\newblock ``On the subdifferentiability of convex functions,''
\newblock {\em Proc. of the Amer. Math. Society}, vol. 16, no. 4, pp. 605--611, 1965.

\bibitem{ching_gursoy.2017}
D.~J. Ching and D.~Gürsoy,
\newblock ``{XDesign}: an open-source software package for designing {X}-ray imaging phantoms and experiments,''
\newblock {\em J. Synchrotron Rad.}, vol. 24, no. 2, pp. 537--544, 2017.

\bibitem{Balke.etal2022}
Thilo Balke, Fernando Davis, Cristina Garcia-Cardona, Soumendu Majee, Michael McCann, Luke Pfister, and Brendt Wohlberg,
\newblock ``Scientific computational imaging code ({SCICO}),''
\newblock {\em J. Open Source Software}, vol. 7, no. 78, pp. 4722, 2022.

\bibitem{BauschkeCombettes2017}
Heinz~H. Bauschke and Patrick~L. Combettes,
\newblock {\em Convex Analysis and Monotone Operator Theory in Hilbert Spaces},
\newblock CMS Books in Mathematics/Ouvrages de Math\'ematiques de la SMC. Springer, 2nd edition, 2017.

\end{thebibliography}

\end{document}